\newtheorem{observation}[theorem]{Observation}
\newcommand{\type}{\mathit{type}}
\newcommand{\flip}{\mathit{flip}}
\renewcommand{\L}{\mathbb{L}}
\newcommand{\T}{\mathbb{T}}
\newcommand{\F}{\mathbb{F}}
\newcommand{\A}{\mathbb{A}}
\newcommand{\G}{\mathbb{G}}
\newcommand{\Q}{\mathbb{Q}}
\newcommand{\Id}{\mathit{Id}}
\renewcommand{\S}{\mathbb{S}}
\def\qed{\ifhmode\unskip\nobreak\fi\hfill\ifmmode\square\else$\square$\fi}
\def\ColoredCube{\textsc{Colored Cube Isomorphism}}
\def\GI{\mathsf{GI}}
\def\ComplexityTheorem{The problem {\ColoredCube} is $\GI$-complete even if both input colorings has a form $n^2 \rightarrow [2]$.}
\title{Automorphisms of the Cube $n^d$}
\renewcommand{\problem}[3]{\vskip 0.1cm \begin{tabular}{lp{9.5cm}} PROBLEM: &  {\sc #1} \\ {\it Instance}: & #2\\ {\it Question}: & #3\end{tabular} \vskip 0.1cm}
\author{
Pavel Dvo\v r\'ak\inst{1}
\thanks{The research leading to these results has received funding from the European Research Council under the European Union's Seventh Framework Programme (FP/2007-2013) / ERC Grant Agreement n. 616787. Supported by Czech Science Foundation 
GA{\v C}R (grant \#19-27871X).}
\and
Tom\'a\v s Valla\inst{2}
\thanks{T. Valla acknowledges the support of the OP VVV MEYS funded project
 10 CZ.02.1.01/0.0/0.0/16\_019/0000765 ``Research Center for Informatics''.}
}
\institute{
Faculty of Mathematics and Physics,\\
Charles University in Prague, Czech Republic\\
\email{koblich@iuuk.mff.cuni.cz}
\medskip
\and
Faculty of Information Technology,\\
Czech Technical University in Prague, Czech Republic\\
\email{tomas.valla@fit.cvut.cz}
}
\begin{document}

\maketitle

\begin{abstract}
Consider a hypergraph $n^d$ where the vertices are points
of the $d$-dimensional cube $[n]^d$ and the edges
are all sets of $n$ points such that they are in one line.
We study the structure of the group of automorphisms of $n^d$,
i.e., permutations of points of $[n]^d$ preserving the edges.
In this paper we provide a complete characterization.
Moreover, we consider the {\ColoredCube} problem
of deciding whether for two colorings of the vertices of $n^d$
there exists an automorphism of $n^d$ preserving the colors.
We show that this problem is $\GI$-complete.
\footnote{A preliminary version of this work  appeared in
the proceedings of Computing and Combinatorics -- 22nd International Conference, COCOON 2016.}
\end{abstract}

\section{Introduction}
\label{sec:Intro}

Let us denote $[n] = \{1, \dots, n\}$.
Let $[n]^d$ be the set of all points $(p_1,\dots,p_d)$ such that $p_i\in[n]$ for every $1\le i\le d$.
Let $s = (s^1, \dots ,s^n)$ be a sequence of $n$ distinct points of $[n]^d$.
Let $s^i = [s^i_1,\dots,s^i_d]$ for every $1 \leq i \leq n$.
We say that $s$ is \emph{linear} if for every $1 \leq j \leq d$ a sequence $\tilde{s}_j = (s^1_j, \dots, s^n_j)$
is strictly increasing, strictly decreasing or constant.
Note that at least one sequence $\tilde{s}_j$ is nonconstant as $s$ is a sequence of $n$ distinct points.
A set of points $\{p^1, p^2, \dots , p^n\}\subseteq n^d$ is a \emph{line} if it can be ordered into a linear sequence $(q^1, q^2, \dots, q^n)$.
We denote the set of all lines of $[n]^d$ by $\L(n^d)$.
A \emph{combinatorial cube} $n^d$ is a hypergraph $\bigl([n]^d, \L(n^d)\bigr)$.
Note that there is a fundamental difference between the combinatorial cube $n^d$
and another well-studied structure, the \emph{hypercube} $Q_d$,
defined as the graph $Q_d=\bigl(\{0,1\}^d,E\bigr)$ where $\{u,v\}\in E$ if and only if the vectors $u,v$
differ in exactly one coordinate.

We denote the group of all permutations on $n$ elements by $\mathbb{S}_n$.
A permutation $S \in \mathbb{S}_{n^d}$ is an \emph{automorphism} of the cube $n^d$ if
$\ell = \{v_1, \dots ,v_n\} \in \L(n^d)$ implies $S(\ell) = \bigl\{S(v_1), \dots ,S(v_n)\bigr\} \in \L(n^d)$.
Informally, an automorphism of the cube $n^d$ is a permutation of the cube points
which preserves the lines.
We denote the set of all automorphisms of $n^d$ by $T_n^d$.
Note that all automorphisms of $n^d$
with a composition $\circ$ form a group $\T_n^d = (T_n^d, \circ, \Id)$.
Our main result is the characterization of the generators of the group $\T_n^d$ and
computing the order of $\T_n^d$.
Surprisingly, the structure of  $\T_n^d$ is richer than only the obvious rotations and symmetries.
We use two groups of automorphisms for characterization of the group $\T_n^d$ as follows.

The first is a group $\G_n^d$ which is isomorphic to the hypercube automorphism group $\Q_d$~\cite{harary99}.
Generators of $\Q_d$ are
\begin{enumerate}
\item \emph{Translations} $T_a$ by $a \in \{0,1\}^d$, $T_a\bigl([x_1,\dots,x_d]\bigr) = \bigl[x_1 + a_1,\dots,x_d + a_d\bigr]$ where the sum is modulo 2.
 \item \emph{Rotations} $R_\pi$ by $\pi \in \S_d$, $R_\pi\bigl([x_1,\dots,x_d]\bigr) = [x_{\pi(1)},\dots,x_{\pi(d)}]$.
\end{enumerate}
It is known that every automorphism of the hypercube can be composed as $T \circ R$ for a translation $T$ and a rotation $R$.
To use automorphisms in $\Q_d$ for the combinatorial cube, we need to change the definition of the translations.
The rotations can be used immediately. 
Thus, the group $\G_n^d$ is generated by
\begin{enumerate}
 \item \emph{Translations} $T_a$ by $a \in \{0,1\}^d$, $T_a\bigl([x_1,\dots,x_d]\bigr) = \bigl[\flip(x_1,a_1),\dots,\flip(x_d,a_d)\bigr]$ where
 \[
  \flip(i,b) = 
  \begin{cases}
   i & b = 0, \\
   n - i + 1 & b = 1.
  \end{cases}
 \]
 \item \emph{Rotations} $R_\pi$ by $\pi \in \S_d$, $R_\pi\bigl([x_1,\dots,x_d]\bigr) = [x_{\pi(1)},\dots,x_{\pi(d)}]$.
\end{enumerate}
It is clear that groups $\G_n^d$ and $\Q_d$ are isomorphic and thus every automorphism in $\G_n^d$ can be composed as $T \circ R$ for a translation $T$ and a rotation $R$.

The second group is a group of permutation automorphisms $\mathbb{F}_{n}$ that contains mappings
$F_\rho \bigr([x_1, \dots ,x_d]\bigl) = \bigl[\rho(x_1), \dots ,\rho(x_d)\bigr]$
where $\rho \in \mathbb{S}_n$ such that it has a \emph{symmetry property}:
if $\rho(i) = j$ then $\rho(n - i +1) = n - j + 1$.

Our main result is summarized in the following theorem. For the proof we use and generalize some ideas of Silver~\cite{silver67},
who characterized the group of automorphisms of the cube $4^3$.
\begin{theorem}
\label{thm:MainResult}
Let $n > 2$.
The group $\T_n^d$ is generated by the elements of $\G_n^d \cup \F_n$.
The order of the group $\T_n^d$ is $2^{d-1 + k} d!k!$ where $k = \lfloor\frac{n}{2}\rfloor$.
\end{theorem}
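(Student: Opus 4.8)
The plan is to prove the two inclusions $\langle\mathbb{R}_d\cup\mathbb{F}_n\cup\mathbb{X}\rangle\subseteq\T_n^d$ and $\T_n^d\subseteq\langle\mathbb{R}_d\cup\mathbb{F}_n\cup\mathbb{X}\rangle$ separately, and then to read the order off from the internal structure of $G:=\langle\mathbb{R}_d\cup\mathbb{F}_n\cup\mathbb{X}\rangle$. The first inclusion is routine: each generator sends lines to lines (for $F_\pi$ the symmetry property is exactly what guarantees that the ``reversed'' coordinate directions are respected), so $G\le\T_n^d$. For the order I would first observe that $\langle\mathbb{R}_d\cup\mathbb{X}\rangle$ is the full hyperoctahedral group $B_d$ of signed coordinate permutations (the rotations $\mathbb{R}_d$ form the orientation-preserving index-$2$ subgroup and $X$ is an orientation-reversing transposition of two coordinates), of order $2^d d!$, while $\mathbb{F}_n$ is the centralizer in $\mathbb{S}_n$ of the flip $\iota\colon p\mapsto n-1-p$, of order $2^k k!$ with $k=\lfloor n/2\rfloor$. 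A direct computation using $\pi\iota=\iota\pi$ shows that every $F_\pi$ commutes with every element of $\langle\mathbb{R}_d\cup\mathbb{X}\rangle$, and that $F_\pi$ is itself a signed permutation only for $\pi\in\{\Id,\iota\}$; hence $\mathbb{F}_n\cap\langle\mathbb{R}_d\cup\mathbb{X}\rangle=\{\Id,A\}$, where $A=F_\iota$ is the antipodal map. Consequently $G$ is the commuting product of these two subgroups and $|G|=2^d d!\cdot 2^k k!/2=2^{d-1+k}d!k!$, as claimed.

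For the reverse inclusion I would take an arbitrary $T\in\T_n^d$ and drive it to the identity by composing with generators. By the Main-Diagonal Lemma $T$ permutes the $2^{d-1}$ main diagonals, and $B_d=\langle\mathbb{R}_d\cup\mathbb{X}\rangle$ acts transitively on them, so after composing with a suitable element of $\langle\mathbb{R}_d\cup\mathbb{X}\rangle$ I may assume $T$ fixes the principal diagonal $m_0=\{[t,\dots,t]\mid t\in[n]\}$ setwise. The induced permutation $T|_{m_0}$ must respect the reflection of $m_0$ about its centre (forced by the incidences of $m_0$ with the remaining lines, in the spirit of the Front-Face Lemma), so it is realised by some symmetric $\pi$; composing with $F_\pi^{-1}$ I may further assume $T$ fixes $m_0$ pointwise, in particular it fixes the corner $c_0=[0,\dots,0]$. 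Now the Edge Lemma shows that $T$ permutes the edges at $c_0$, hence its neighbours; a coordinate permutation from $S_d\le\langle\mathbb{R}_d\cup\mathbb{X}\rangle$ (which fixes $m_0$ pointwise) brings all $d$ neighbours of $c_0$ to rest, and the Fixing-Corners Lemma then yields that the resulting automorphism fixes every corner of $n^d$.

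The remaining task is to show that an automorphism fixing all corners is necessarily of the form $F_\sigma$ for a symmetric $\sigma$ with $\sigma(0)=0$. Here I would apply the Front-Face Lemma to each two-dimensional coordinate face: fixing its four corners forces the induced permutation on each boundary edge to satisfy the symmetry $i\leftrightarrow n-1-i$, and by transporting this relabelling along lines and faces one checks that a single symmetric permutation $\sigma$ governs every coordinate simultaneously and extends from the boundary to every interior point. This places the reduced map in $\mathbb{F}_n$, so the original $T$ is a product of generators, giving $\T_n^d\subseteq G$; combined with the first paragraph this yields $\T_n^d=G$ and the stated order.

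I expect the main obstacle to be precisely this last structural lemma. While the Front-Face Lemma delivers the local $i\leftrightarrow n-1-i$ symmetry on each edge, proving that \emph{one and the same} $\sigma$ controls all $d$ coordinate directions, and that it propagates coherently from the boundary edges to the interior, requires a careful induction (on the dimension, or on the distance of a point from $c_0$) that repeatedly exploits line incidences; the reflection-symmetry claim for $T|_{m_0}$ in the second step is a smaller instance of the same difficulty.
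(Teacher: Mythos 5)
Your overall architecture matches the paper's (compose with explicit generators until everything is pinned down, then prove rigidity, then read off the order from a product formula), and your order computation is sound --- in fact, working with the full signed-permutation group $\langle\mathbb{R}_d\cup\mathbb{X}\rangle$ of order $2^dd!$ and intersecting it with $\mathbb{F}_n$ in one step is cleaner than the paper's route, which computes $\mathbb{R}_d\cap\mathbb{F}_n$ and decides whether $\mathbb{X}$ lies in the product $\mathbb{R}_d\mathbb{F}_n$ separately, with a parity-of-$d$ case split in each (the antipodal map $F_\sigma$, $\sigma(i)=n-i-1$, lies in $\mathbb{R}_d$ only for even $d$ but always in $\langle\mathbb{R}_d\cup\mathbb{X}\rangle$, which is why your intersection is uniformly of order $2$).

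The genuine gap is in the rigidity endgame, and it is the one you yourself flag. Two specific problems. First, the step ``$T|_{m_0}$ must respect the reflection of $m_0$ about its centre'' is not covered by the Front-Face Lemma: that lemma concerns a point on an edge of a $2$-dimensional face all four of whose corners are already fixed, whereas at this stage you have only fixed the main diagonal $m_0$ setwise and no corners at all; an analogue for main diagonals would need its own proof. Second, and more seriously, your final claim --- every automorphism fixing all corners equals $F_\sigma$ for a symmetric $\sigma$ with $\sigma(0)=0$ --- is true but unproved; ``transporting the relabelling along lines and faces'' is precisely where all the content lies, and nothing in the quoted lemmas delivers it. The paper avoids both issues by a different reduction: after fixing all corners it pins down the points of a single edge $\ell=\{[i,0,\dots,0]\}$ one at a time, composing with an explicit $F_\pi$ at step $i$ and invoking the Front-Face Lemma (now legitimately, since the relevant face has all four corners fixed) to obtain the partner point $[n-i-1,0,\dots,0]$ for free. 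It then proves the cleaner rigidity statement ``all corners plus one edge pointwise fixed implies identity'' by induction on $d$: the $(d-1)$-face containing $\ell$ is pointwise fixed by induction, each adjacent $(d-1)$-face shares a fixed edge with it and so is fixed too, hence all outer points are fixed, hence every line (containing two outer points) is fixed, hence every point (lying on two fixed lines) is fixed. You should either adopt that reduction or supply a full proof of your structural lemma; as written, the second half of your argument does not close.
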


An \emph{isomorphism} of two hypergraphs $H_1 = (V_1, E_1)$, $H_2 = (V_2, E_2)$
is a bijection $f:V_1\to V_2$ such that for each $\{v_1,\dots,v_r\}\subseteq V_1$,
$\{v_1,\dots,v_r\}\in E_1 \Leftrightarrow \{f(v_1),\dots,f(v_r)\}\in E_2$.
A \emph{coloring} of a hypergraph $H = (V, E)$ by $k$ colors is a function $s: V \to [k]$.
The following problem is well studied.

\problem{Colored Hypergraph Isomorphism (CHI)}
  {Hypergraphs $H_1 = (V_1, E_1), H_2 = (V_2, E_2)$, colorings ${s_1: V_1 \to [k]}, {s_2: V_2 \to [k]}$.}
  {Is there an isomorphism $f: V_1 \to V_2$ of $H_1$ and $H_2$ such that it preserves the colors? I.e., it holds $s_1(v) = s_2\bigl(f(v)\bigr)$ for every vertex $v$ in $V_1$.}

There are several {\sf FPT} algorithms\footnote{The parameter is the maximum number of vertices colored by the same color.} for {\sc CHI}---see Arvind et. al.~\cite{arvind06,arvind13}.
The problem {\ColoredCube} is defined as the problem {\sc CHI} where both $H_1, H_2= n^d$.
Since we know the structure of the group $\T^d_n$, it is natural to ask if {\ColoredCube} is an easier problem than {\sc CHI}.
We prove that the answer is negative.
The class of decision problems ${\GI}$ contains all problems with a polynomial
reduction to the problem {\sc Graph Isomorphism}.

\problem{Graph Isomorphism}
  {Graphs $G_1, G_2$.}
  {Are the graphs $G_1$ and $G_2$ isomorphic?}

\noindent

It is well known that {\sc CHI} is ${\GI}$-complete, see Booth and Colbourn~\cite{booth77}.
We prove the same result for {\ColoredCube}.




\begin{theorem}
\label{thm:Complexity}
\ComplexityTheorem
\end{theorem}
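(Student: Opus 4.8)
The plan is to prove the two inclusions separately. For membership in $\GI$, observe that {\ColoredCube} is exactly the restriction of \textsc{CHI} to instances with $H_1 = H_2 = H_n^d$; since \textsc{CHI} lies in $\GI$~\cite{booth77}, the very same polynomial reduction to \textsc{Graph Isomorphism} applies to our instances, so {\ColoredCube} $\in \GI$. It therefore remains to prove $\GI$-hardness, and for this it suffices to exhibit a polynomial reduction from \textsc{Graph Isomorphism} itself (the canonical $\GI$-complete problem) to {\ColoredCube} using only colorings of the form $n^d \to [2]$.

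For the reduction I would work in dimension $d = 2$. Given two graphs $G_1, G_2$ on the common vertex set $[m]$, set $n = 2m$ and pair up the values of $[n]$ by $v(x) = \min(x, n-1-x) \in [m]$, so that each vertex of the graphs corresponds to the pair $\{x, n-1-x\}$. For $t \in \{1,2\}$ define a coloring $s_t : n^2 \to [2]$ by letting $s_t([x_1,x_2]) = 1$ if $\{v(x_1), v(x_2)\}$ is an edge of $G_t$ and $s_t([x_1,x_2]) = 0$ otherwise (in particular points with $v(x_1) = v(x_2)$ receive color $0$). Both colorings are computable in time polynomial in $m$, and the produced instance is $(H_n^2, s_1)$ versus $(H_n^2, s_2)$.

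Correctness rests on Theorem~\ref{thm:MainResult}. By it, every $T \in \T_n^2$ is a composition of a symmetry of the square (generated by $\R_2$ and $\X$, i.e. coordinate swaps and coordinate flips $x \mapsto n-1-x$) and a value automorphism $F_\pi \in \F_n$. The crucial point is that the action of $T$ on colors factors through a single permutation of the vertex set: a coordinate swap preserves the unordered pair $\{v(x_1),v(x_2)\}$; a flip fixes $v$ because $v(n-1-x) = v(x)$; and, by the symmetry property of $\pi$, the map $F_\pi$ sends $v(x)$ to $\sigma(v(x))$ for a well-defined permutation $\sigma \in \mathbb{S}_m$ of the pairs. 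Hence $T$ sends a point with vertex content $\{a,b\}$ to one with vertex content $\{\sigma(a),\sigma(b)\}$, so $T$ maps $s_1$ to $s_2$ if and only if $\{a,b\} \in E(G_1) \Leftrightarrow \{\sigma(a),\sigma(b)\} \in E(G_2)$ for all pairs, i.e. iff $\sigma$ is an isomorphism $G_1 \to G_2$. Conversely every such $\sigma$ lifts to an $F_\pi$ that preserves the colors. Thus a color-preserving automorphism exists precisely when $G_1 \cong G_2$, which completes the reduction.

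The main obstacle is not the construction but the step asserting that no \emph{exotic} color-preserving bijection exists beyond those induced by vertex permutations; this is exactly where the complete characterization of $\T_n^d$ is indispensable, since it guarantees that the group is generated by maps whose effect on the coloring factors through the vertex-pair action. The only routine care needed is verifying this factorization generator by generator (the symmetry property giving $v \circ \pi = \sigma \circ v$) and checking the degenerate diagonal points with $v(x_1) = v(x_2)$, which carry color $0$ under both colorings and hence impose no constraint.
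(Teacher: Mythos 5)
Your proof is correct, but the reduction is genuinely different from the paper's. The paper embeds the $n\times n$ adjacency matrix of $G$ into the corner of a larger square $[k]^2$ with $k=2n+4$ and adds two marker points $[n,n]$ and $[n,n+1]$ on and next to a main diagonal; it then needs a separate lemma (Lemma~\ref{lem:PermutationAutomorphism}) showing that these markers force any color-preserving automorphism to be a pure permutation automorphism $F_\pi$ that moreover maps the vertex block $\{0,\dots,n-1\}$ to itself, which in turn requires the extra hypothesis that the input graphs have no isolated vertices. You instead take $n=2m$ and make the coloring invariant by construction under the entire geometric part of $\T_n^2$, by encoding each vertex as the pair $\{x,\,n-1-x\}$: coordinate swaps and flips act trivially on the vertex content $\{v(x_1),v(x_2)\}$, and the symmetry property of $\pi$ guarantees that $F_\pi$ acts on contents through a well-defined $\sigma\in\mathbb{S}_m$, so the content action of any composition of generators is a single vertex permutation. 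This eliminates both the markers and the no-isolated-vertices assumption, at the price of verifying the factorization generator by generator (which you do, and which is the step where Theorem~\ref{thm:MainResult} is indispensable in both arguments). Two small points to record: strictly speaking Theorem~\ref{thm:MainResult} gives you a generating set rather than the normal form ``symmetry composed with $F_\pi$'' you quote, but your argument only needs the generating set; and for $m=1$ you get $n=2$, where $\T_2^2$ is the full symmetric group rather than the group of Theorem~\ref{thm:MainResult} --- a harmless degenerate case (both colorings are constant and one-vertex graphs are always isomorphic) that deserves a sentence.
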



The paper is organized as follows.
In Section~\ref{sec:Prelim} we present some basic properties of the combinatorial cube $n^d$, prove that $\G_n^d$ and $\F_n$ are automorphism groups and also we count the order of the group $\T_2^d$,
which structure is different from other automorphism groups.
Next in Sections~\ref{sec:Technical} and~\ref{sec:GeneralCube}, we characterize the generators for $\T_n^d$.
In Section~\ref{sec:SizeTnd} we count the order of the group $\T_n^d$.
In the last section we study the complexity of {\ColoredCube} and prove Theorem~\ref{thm:Complexity}.

\subsection{Motivation}

A natural motivation for this problem comes from the game of Tic-Tac-Toe.
It is usually played on a 2-dimensional square grid and each player puts his
tokens (usually crosses for the first player and rings for the second)
at the points on the grid. A player wins if he occupies a line with his token
vertically, horizontally or diagonally (with the same length as the grid size)
faster than his opponent.
Tic-Tac-Toe is a member of a large class of games called strong positional
games. For an extraordinary reference see Beck~\cite{beck06}.
The size of a basic Tic-Tac-Toe board is $3 \times 3$ and it is easy to show
by case analysis that the game is a draw if both players play optimally.
However, the game can be generalized to a larger grid and more dimensions.
The $d$-dimensional Tic-Tac-Toe is played on the points of a $d$-dimensional combinatorial cube
and it is often called the game $n^d$.
With larger boards the case analysis becomes unbearable even using computer
search and clever algorithms have to be devised.

The only (as far as we know) non-trivial solved 3-dimensional Tic-Tac-Toe is
the game $4^3$, which is called Qubic. Qubic is a win for the first player,
which was shown by Patashnik~\cite{patashnik80} in 1980.
It was one of the first examples of computer-assisted proofs based
on a brute-force algorithm, which utilized several clever techniques for pruning the game tree.
Another remarkable approach for solving Qubic was made by Allis~\cite{allis94} in 1994,
who introduced several new methods.
However, one technique is common for both authors: the detection of isomorphisms of game configurations.
As the game of Qubic is highly symmetric, this detection
substantially reduces the size of the game tree.

For the game $n^d$, theoretical results are usually achieved for large $n$ or large $d$.
For example, by the famous Hales and Jewett theorem~\cite{hales63}, for any $n$
there is (an enormously large) $d$ such that the hypergraph $n^d$ is not 2-colorable,
that means, the game $n^d$ cannot end in a draw.
Using the standard Strategy Stealing argument, $n^d$ is thus a first player's win.
In two dimensions, each game $n^2$, $n>2$, is a draw (see Beck~\cite{beck06}).
Also, several other small $n^d$ are solved.

All automorphisms for Qubic were characterized by Rolland Silver~\cite{silver67} in 1967.
As in the field of positional games the game $n^d$ is intensively studied
and many open problems regarding $n^d$ are posed,
the characterization of the automorphism group of $n^d$ is a natural task.

The need to characterize the automorphism group came from our real
effort to devise an algorithm and computer program that would be
able to solve the game $5^3$, which is the smallest unsolved Tic-Tac-Toe game.
While our effort of solving $5^3$ is currently not yet successful, we were
able to come up with the complete characterization of the automorphism group $n^d$,
giving an algorithm for detection of isomorphic positions not only in the game $5^3$,
but also in $n^d$ in general.

A game configuration can be viewed as a coloring $s$ of $n^d$ by crosses,
rings and empty points, i.e., $s: n^d \rightarrow [3]$.
Since we know the structure of the group $\T^d_n$, this characterization yields
an algorithm for detecting isomorphic game positions by simply trying all combinations
of the generators (the number of the combinations is given by the order of the group $\T^d_n$).
A natural question arises: can one obtain a faster algorithm?
Note that the hypergraph $n^d$ has polynomially many edges in the number of vertices.
Therefore, from a point of view of polynomial-time algorithms it does not matter if there are hypergraphs $n^d$ with colorings or only colorings on the input of {\ColoredCube}.
Due to Theorem~\ref{thm:Complexity} we conclude that deciding if two game
configurations are isomorphic is as hard as deciding if two graphs are isomorphic.

Although our primary motivation came from the game of Tic-Tac-Toe,
we believe our result has much broader interest
as it presents an analogy of automorphism characterization results of
hypercubes (see e.g. \cite{choudum08,harary99}).

\section{Preliminaries}
\label{sec:Prelim}
Beck~\cite{beck06} in his work defined lines to be ordered (the linear sequences in our case).
However, for us it is more convenient to have unordered lines because some automorphisms will change the order of points in the line.

Let $\ell$ be a line and $q = (q^1,\dots,q^n)$ be an ordering of $\ell$ into a linear sequence.
Note that every line in $\L(n^d)$ has two such orderings. 
Another ordering of $\ell$ into a linear sequence is $(q^n,\dots,q^1)$.
We define a \emph{type} of a sequence $\tilde{q}_j = (q^1_j,\dots,q^n_j)$ as
$+$ if $\tilde{q}_j$ is strictly increasing,
$-$ if $\tilde{q}_j$ is strictly decreasing,
$c$ if $\tilde{q}_j$ is constant and $q^i_j = c$ for every $1 \leq i \leq n$.
A type of $q$ is $\type(q) = \bigl(\type(\tilde{q}_1),\dots,\type(\tilde{q_n})\bigr)$.

\emph{Type} of a line $\ell$ is a type of an ordering $q$ of $\ell$ into a linear sequence such that the first non-constant entry of $\type(q)$ is $+$.
For example, let $$\ell = \bigl\{[1,1,4], [1,2,3], [1,3,2], [1,4,1]\bigr\} \in \L(4^3)$$ and $q_1$ and $q_2$ be distinct orderings of $\ell$ into a linear sequence.
Then, $ \type(q_1) = (1,+,-)$ and $\type(q_2) = (1,-,+)$.
By definition $\type(\ell) = \type(q_1) = (1,+,-)$.

Let us now define several terms we use in the rest of the paper.
A \emph{dimension $\dim(\ell)$} of a line $\ell \in \L(n^d)$ is
$
\dim(\ell) = \bigl|\bigl\{i \in \{1,\dots,d\}| \type(\ell)_i \in \{+,-\}\bigr\}\bigr|.
$
A \emph{degree $\deg(p)$} of a point $p \in [n]^d$ is a number of incident lines, formally
$
\deg(p) = \bigl|\{\ell \in \L(n^d)| p \in \ell\}\bigr|.
$
Two points $p_1, p_2 \in [n]^d$ are \emph{collinear}, if there exists a line $\ell \in \L(n^d)$, such that $p_1 \in \ell$ and $p_2 \in \ell$.
A point $p \in [n]^d$ is called a \emph{corner} if $p$ has coordinates only $1$ and $n$.
A point $p = [x_1, \dots, x_d] \in [n]^d$ is an \emph{outer point} if there exists at least one $i \in \{1, \dots, d\}$ such that $x_i \in \{1, n\}$.
If a point $p \in [n]^d$ is not an outer point then $p$ is called an \emph{inner point}.

A line $\ell \in \L(n^d)$ is called an \emph{edge} if $\dim(\ell)=1$ and $\ell$ contains two corners.
Two corners are \emph{neighbors} if they are connected by an edge.
A line $\ell \in \L(n^d)$ with $\dim(\ell) = d$ is called a \emph{main diagonal}.
We denote the set of all main diagonals by $\L_m(n^d)$.
For better understanding the notions see Figure~\ref{fig:example} with some examples in the cube $4^3$.
\begin{figure}
\centering
\includegraphics{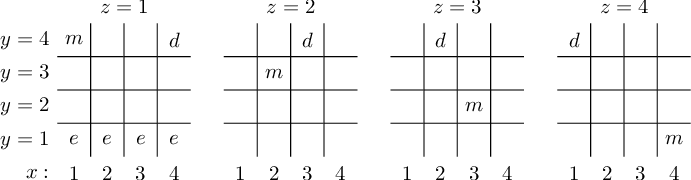}
\caption{The cube $4^3$ with some examples of lines. An edge $e$ has a type $(+,1,1)$, a line $d$ has a dimension 2 and a type $(+,4,-)$ and a main diagonal $m$ has a type $(+,-,+)$.}
\label{fig:example}
\end{figure}

A $k$-dimensional \emph{face} $F$ of the cube $n^d$ is a maximal set of points of $n^d$,
such that there exist two index sets $I,J \subseteq \{1,\dots,d\}, I \cap J = \emptyset, |I| + |J| = d - k$
and for each point $[x_1, \dots, x_d]$ in $F$ holds that
$x_i=1$ for each $i \in I$ and, $x_j = n$ for each $j \in J$.
For example, $\bigl\{[x,y,1,n]|x,y \in [n] \bigr\}$ is a 2-dimensional face of the cube $n^4$.
Note that an edge is an $1$-dimensional face.

A point $p \in [n]^d$ is \emph{fixed} by an automorphism $S$ if $S(p)=p$.
A set of points $\{p_1, \dots ,p_k\}$ is fixed by an automorphism $S$
if $\{p_1, \dots ,p_k\} = \bigl\{S(p_1), \dots ,S(p_k)\bigr\}$.
Note that if a set $B$ is fixed it does not necessarily mean every point of $B$ is fixed.

For $n$ odd, we denote $\gamma = \frac{n+1}{2}$ and the \emph{center} of the cube $n^d$ is the point $c = [\gamma, \dots, \gamma]$.

\subsection{Order of $\T^d_2$}

The combinatorial cube $2^d$ is different from the other cubes because every two points are collinear.
Thus, we have the following proposition.
\begin{proposition}
The order of the group $\T^d_2$ is $(2^d)!$.
\end{proposition}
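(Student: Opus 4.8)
The plan is to show that \emph{every} permutation of the $2^d$ points is an automorphism; the order then follows immediately, since the number of permutations of a $2^d$-element set is $(2^d)!$. The whole argument rests on identifying exactly what the lines of $2^d$ are, and here the special feature noted in the text---that every two points are collinear---does all the work.

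First I would pin down the set $\L(2^d)$. Since $n = 2$, every line consists of exactly two points. I claim conversely that every two-element subset $\{p,q\}$ of $2^d$ is a line. To see this, order the pair as $(p,q)$ and check the linearity condition coordinate by coordinate: in each coordinate $j$ the two entries $(p_j, q_j)$ lie in $\{0,1\}$, so the length-2 sequence is either constant (when $p_j = q_j$), strictly increasing (when $p_j = 0, q_j = 1$), or strictly decreasing (when $p_j = 1, q_j = 0$). Because $p \neq q$, at least one coordinate differs and hence at least one of these sequences is nonconstant, so $(p,q)$ is a linear sequence and $\{p,q\}$ is a line. Combined with the fact that lines have exactly two points, this gives the clean equality
$$\L(2^d) = \bigl\{\{p,q\} \mid p,q \in 2^d,\ p \neq q\bigr\},$$
i.e.\ the lines are precisely all the two-element subsets.

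With this description, the automorphism condition becomes trivial to verify. Let $P \in \mathbb{S}_{2^d}$ be an arbitrary permutation of the points. For any line $\ell = \{p,q\}$, the image $P(\ell) = \{P(p), P(q)\}$ is again a two-element subset (since $P$ is a bijection and $p \neq q$ forces $P(p) \neq P(q)$), hence again a line. Therefore $P$ preserves $\L(2^d)$, so $P \in \T_2^d$. Conversely every automorphism is by definition a permutation of the points, so $\T_2^d = \mathbb{S}_{2^d}$ and $\bigl|\T_2^d\bigr| = (2^d)!$.

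I do not expect a genuine obstacle here: the only step requiring care is the verification that every pair forms a line, which is exactly where the hypothesis $n = 2$ is essential (for $n \geq 3$ most pairs do not extend to a line, which is precisely why the larger cubes have a far more restrictive automorphism group). Once the lines are identified with all pairs, the fact that automorphisms are unconstrained is immediate.
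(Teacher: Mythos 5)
Your proof is correct and takes essentially the same approach as the paper: the paper's one-line argument is that $H_2^d$ is the complete graph on $2^d$ vertices, and you simply spell out the verification that every pair of points forms a line. No issues.
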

\begin{proof}
Every permutation of the points of the cube $2^d$ is an automorphism,
as the graph $2^d$ is the complete graph on $2^d$ vertices.
\qed\end{proof}
\noindent We further assume that $n > 2$.

\subsection{Basic Groups}
In this subsection we prove that the basic groups $\F_n$ and $\G_n^d$ are groups of automorphisms of the combinatorial cube $n^d$.
In the following proofs we use $\ell$ as an arbitrary line and $q = (q^1,\dots,q^n)$ as an ordering of $\ell$ into a linear sequence.
\begin{lemma}
 Every $F_\rho \in \F_n$ is an automorphism of the combinatorial cube $n^d$.
\end{lemma}
\begin{proof}
 We recall that $F_\rho\bigl([x_1,\dots,x_d]\bigr) = \bigl[\rho(x_1),\dots,\rho(x_d)\bigr]$ where $\rho \in \S_n$.
 Let $\sigma = \rho^{-1}$ and 
 \[
 p = \Bigl(\bigl[\rho\bigl(q^{\sigma(1)}_1\bigr),\dots,\rho\bigl(q^{\sigma(1)}_d\bigr)\bigr],\dots,\bigl[\rho\bigl(q^{\sigma(n)}_1\bigr),\dots,\rho\bigl(q^{\sigma(n)}_d\bigr)\bigr]\Bigr). 
 \]
 We claim that $p$ is an ordering of $F_\rho(\ell)$ into a linear sequence.
 Consider sequences of the $j$-th coordinations of $q$ and $p$. Thus,
 \begin{align*}
  \tilde{q}_j &= \bigl(q^1_j,\dots,q^n_j\bigr),\\
  \tilde{p}_j &= \Bigl(\rho\bigl(q^{\sigma(1)}_j\bigr),\dots,\rho\bigl(q^{\sigma(n)}_j\bigr)\Bigr).  
 \end{align*}
 If $\type(\tilde{q}_j) = c$ then clearly $\type(\tilde{p}_j) = \rho(c)$.
 If $\type(\tilde{q}_j) = +$ then $q^i_j = i$ and
 \[
  \Bigl(\rho\bigl(q^{\sigma(1)}_j\bigr),\dots,\rho\bigl(q^{\sigma(n)}_j\bigr)\Bigr) = \Bigl(\rho\bigl(\sigma(1)\bigr),\dots,\rho\bigl(\sigma(n)\bigr)\Bigr) = (1,\dots,n).
 \]
 Thus, $\type(\tilde{p}_j) = +$.
 In the last case, if $\type(\tilde{q}_j) = -$ then $q^i_j = n - i + 1$. 
 In this case we use that $\rho$ has the symmetry property $\bigl(\rho(n - i + 1) = n - \rho(i) + 1\bigr)$.
 For all $i \in [n]$ holds that
 \[
  \rho\bigl(q^{\sigma(i)}_j\bigr) = \rho \bigl(n - \sigma(i) + 1\bigr) = n - i + 1.
 \]
 Thus,$ \bigl(\rho\bigl(q^{\sigma(1)}_j\bigr),\dots,\rho\bigl(q^{\sigma(n)}_j\bigr)\bigr) = (n,\dots,1)$ and $\type(\tilde{p}_j) = -$.
 Therefore, we prove that $F_\rho(\ell)$ is a line.
 \qed
\end{proof}

\begin{lemma}
 Every translation $T_a \in \G_n^d$ is an automorphism of the combinatorial cube $n^d$.
\end{lemma}
\begin{proof}
We recall that $T_a\bigl([x_1\dots,x_d]\bigr) = \bigl[\flip(x_1,a_1),\dots,\flip(x_d,a_d)\bigr]$, where
\[
 \flip(i,b) = 
 \begin{cases}
  i & b = 0,\\
  n - i + 1 & b = 1.
 \end{cases}
\]
Let $p = \bigl(T_a(q^1),\dots,T_a(q^n)\bigr)$.
We prove that $p$ is an ordering of $T_a(\ell)$ into a linear sequence.
Consider a sequence $\tilde{q}_j$ (or $\tilde{p}_j$) of the $j$-th coordinates of $q$ (or $p$).
If $\type(\tilde{q}_j) = c$ then $\type(\tilde{p}_j) = c$ if $a_j = 0$ or $\type(\tilde{p}_j) = n - c + 1$ if $a_j = 1$.
If $\type(\tilde{q}_j) = +$ then $p_j^i = i$ if $a_j = 0$ or $p_j^i = n - i + 1$ if $a_j = 1$.
Thus,
\[
 \bigl(T_a(q^1)_j,\dots,T_a(q^n)_j\bigr) = 
 \begin{cases}
  (1,\dots,n) & a_j = 0, \\
  (n,\dots,1) & a_j = 1.
  \end{cases}
\]
Thus, $\type(\tilde{p}_j) = +$ or $-$ depending on $a_j$.
If $\type(\tilde{q}_j) = -$ the situation is opposite.
If $a_j = 0$ then $p^i_j = n - i - 1$ and $p^i_j = i$ if $a_j = 1$.
Thus, again $\type(\tilde{p}_j) = +$ or $-$.
 \qed
\end{proof}

\begin{lemma}
 Every rotation $R_\pi \in \G_n^d$ is an automorphism of the combinatorial cube $n^d$.
\end{lemma}
\begin{proof}
We recall that $R_\pi\bigl([x_1\dots,x_d]\bigr) = [x_{\pi(1)},\dots,x_{\pi(d)}]$, where $\pi \in \S_d$.
We claim that $p = \bigl(R_\pi(q^1),\dots,R_\pi(q^n)\bigr)$ is an ordering of $R_\pi(\ell)$ into a linear sequence.
Let $\tilde{q}_j$ and $\tilde{p}_j$ be sequences of $j$-th coordinates of $q$ or $p$, respectively.
Let $\sigma = \pi^{-1}$. 
Note that the sequence $\tilde{p}_j$ is exactly the sequence $\tilde{q}_{\sigma(j)}$.
Thus, every sequence $\tilde{p}_j$ has a type $+$, $-$ or a constant.
\qed
\end{proof}

\section{Corners, Main Diagonals and Edges}
\label{sec:Technical}

In this section we investigate how every automorphism $S \in \T_n^d$ maps main diagonals, edges and corners.
First, we prove some easy observation, which were also used by Silver~\cite{silver67}.

\begin{observation}
\label{obs:FixedLine}
If an automorphism $S \in \T_{n}^d$ fixes two collinear points $p, q \in [n]^d$,
then $S$ also fixes a line $\ell \in \L(n^d)$ such that $p,q \in \ell$.
\end{observation}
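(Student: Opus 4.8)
The plan is to reduce the statement to a single uniqueness fact about lines: \emph{through two distinct collinear points of $n^d$ there passes exactly one line}. Granting this, the observation is immediate. Let $\ell \in \L(n^d)$ be the unique line with $p, q \in \ell$, which exists because $p$ and $q$ are assumed collinear. Since $T$ is an automorphism, $T(\ell)$ is again a line, and because $T(p) = p$ and $T(q) = q$ we have $p, q \in T(\ell)$. Thus $T(\ell)$ is a line through the same two points, so uniqueness forces $T(\ell) = \ell$; that is, $T$ fixes $\ell$, as required.

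So the real work lies in the uniqueness claim, which I would prove directly from Beck's linear-sequence description recalled in the Preliminaries. First I would fix any linear ordering $(s^1, \dots, s^n)$ of a line $\ell$ containing $p$ and $q$, and write $p = s^a$, $q = s^b$ with $a \neq b$; reversing the ordering if necessary, I may assume $a < b$. The key point is that the entire type of $\ell$, together with all of its constant values, can be read off directly from $p$ and $q$: in coordinate $j$ the sequence $\tilde{s}_j$ is constant exactly when $p_j = q_j$ (and then its value is $p_j$), is of type $+$ exactly when $p_j < q_j$, and is of type $-$ exactly when $p_j > q_j$. This uses that a nonconstant monotone coordinate sequence must run through all of $0, \dots, n-1$, so it is determined up to direction, and the direction is pinned down by comparing $p_j$ and $q_j$ under the normalization $a < b$.

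Since a line is completely determined by its type together with the values of its constant coordinates, the previous paragraph shows that $\ell$ depends only on $p$ and $q$, and not on the chosen ordering nor on which line through $p,q$ we started with; hence any two lines through $p$ and $q$ coincide. I expect this coordinate-wise reading-off to be the only delicate point. One must check that the normalization $a < b$ (a consistent choice of which point comes first) is exactly what removes the $+/-$ ambiguity arising from the two orderings of a single line, and that the case $p_j = q_j$ genuinely forces a constant coordinate, since a nonconstant coordinate would take distinct values at the distinct positions $a$ and $b$. Once these two checks are in place, no further computation is needed and the observation follows.
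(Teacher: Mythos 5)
Your argument is correct and follows the same route as the paper: reduce the observation to the fact that two distinct points lie on at most one line of $n^d$, and conclude that $T(\ell)$ must coincide with $\ell$. The paper simply asserts this uniqueness, whereas you additionally verify it by reading off the line's type and constant coordinates from $p$ and $q$; that verification is sound, so nothing is missing.
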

\begin{proof}
For any two distinct points $p_1, p_2 \in [n]^d$ there is at most one line $\ell \in \L(n^d)$ such that $p_1, p_2 \in \ell$.
Therefore, if the points $p$ and $q$ are fixed then the line $\ell$ has to be fixed as well.
\qed\end{proof}

\begin{observation}
\label{obs:FixedIntersection}
If two lines $\ell_1, \ell_2 \in \L(n^d)$ are fixed by $S \in \T_{n}^d$
then their intersection, a point $p$ in $\ell_1 \cap \ell_2$, is fixed by $S$.
\end{observation}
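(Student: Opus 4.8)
The plan is to exploit that an automorphism $T \in \T_n^d$ is a bijection on the points of $n^d$ that carries lines to lines, so that a \emph{fixed} line is mapped onto itself \emph{as a set}. First I would record the hypothesis in this form: since $\ell_1$ and $\ell_2$ are fixed by $T$, we have $T(\ell_1) = \ell_1$ and $T(\ell_2) = \ell_2$ as subsets of $n^d$. By the definition of the intersection, the point $p = \ell_1 \cap \ell_2$ satisfies $p \in \ell_1$ and $p \in \ell_2$.

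Next I would push $p$ through $T$ and use set-membership. From $p \in \ell_1$ we get $T(p) \in T(\ell_1) = \ell_1$, and from $p \in \ell_2$ we get $T(p) \in T(\ell_2) = \ell_2$. Hence $T(p) \in \ell_1 \cap \ell_2$.

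Finally I would invoke the uniqueness already used in the proof of Observation~\ref{obs:FixedLine}: for any two distinct points there is at most one line of $\L(n^d)$ through them, so two distinct lines meet in at most one point, and thus $\ell_1 \cap \ell_2 = \{p\}$ is a single point. Combining this with $T(p) \in \ell_1 \cap \ell_2$ forces $T(p) = p$, i.e.\ $p$ is fixed by $T$. There is no genuine obstacle here; the only subtlety worth stating explicitly is that ``fixing a line'' means fixing it setwise rather than pointwise, and setwise fixing is exactly what the argument needs and is enough to pin down the single intersection point.
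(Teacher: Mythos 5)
Your proof is correct and follows essentially the same route as the paper's: both arguments rest on the single fact that two distinct lines of $\L(n^d)$ meet in at most one point, so the setwise-fixed intersection forces $T(p)=p$. You merely spell out the intermediate step $T(p)\in T(\ell_1)\cap T(\ell_2)=\ell_1\cap\ell_2$ that the paper leaves implicit.
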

\begin{proof}
For any two lines $\ell, \ell'$ there is at most one point in $\ell \cap \ell'$. 
Therefore, if the lines $\ell_1$ and $\ell_2$ are fixed then the point $p$ has to be fixed as well.
\qed\end{proof}

\begin{lemma}
\label{lem:FrontLine}
Let $F  = \bigr\{[x,y,1, \dots ,1]|x,y \in [n]\bigl\}$ be a 2-dimensional face of $n^d$,
and let an automorphism $S \in \T_{n}^d$ fixes all 4 corners of $F$,
i.e., points $[1,\dots,1]$, $[n,1, \dots ,1]$, $[1, n, 1, \dots ,1]$ and $[n, n,1, \dots ,1]$.
Then, if $S$ fixes a point $[i,1, \dots ,1], i \in [n]$ it also fixes a point $[n-i+1,1, \dots ,1]$.
\end{lemma}

\begin{proof}
The automorphism $S$ fixes all 4 corners of $F$, therefore by Observation~\ref{obs:FixedLine}, it fixes both diagonals $d_1, d_2 \subset F$.
The types of $d_1$ and $d_2$ are $\type(d_1) = (+,+,1, \dots ,1)$ and $\type(d_2) = (+,-,1, \dots ,1)$.

Suppose that $S$ fixes a point $p = [i, 1, \dots ,1]$, where $i \in \{2,\dots, n-1\}$ (corners are already fixed).
In three steps we show that the point $p_5 = [n-i+1,1, \dots ,1]$ is fixed (note that for $i = n-i+1$ the proof is trivial, thus we suppose $i \neq \gamma$ for odd $n$). 
Fixed points in a face $7 \times 7$ are depicted in Figure~\ref{fig:FixFrontFace}.

\begin{figure}[h]
\centering
\includegraphics{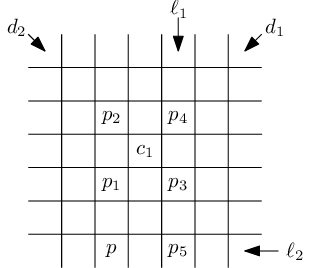}
\caption{How to fix points by diagonals in a front 2-dimensional face.}
\label{fig:FixFrontFace}
\end{figure}

First we show that $p_1 = [i,i,1, \dots, 1]$ is fixed by $S$.
A point $S(p_1)$ must be on $d_1$ and it must be collinear with $p$.
There are 2 points collinear with $p$ on $d_1$: $[i,i,1, \dots 1]$ and $[1, \dots ,1]$,
but the second one is already fixed as a corner.
Therefore, the point $p_1$ is fixed.
A point $p_2 = [i, n-i+1,1, \dots ,1] \in d_2$ is fixed by a similar argument.

Next we show that $S$ fixes a point $p_3 = [n-i+1,i,1, \dots ,1]$.
A point $S(p_3)$ must be on $d_2$ and it must be collinear with $p_1$.
If $n$ is even there are two points on $d_2$ collinear with $p_1$: $p_2$ and $p_3$,
but $p_2$ is fixed due to step 1.
If $n$ is odd, there are 3 points collinear with $p_1$: $p_2$, $p_3$ and the face center $c_1=[\gamma, \gamma,1,\dots,1]$.
However, the point $c_1$ is fixed due to Observation~\ref{obs:FixedIntersection} because it is an intersection of the lines $d_1$ and $d_2$.
Therefore, the point $p_3$ cannot be mapped onto $c_1$.
A point $p_4 = [n-i+1, n-i+1,1, \dots ,1]$ is fixed by a similar argument.

Let $\ell_1$ be a line such that $\type(\ell_1) = (n-i+1,+,1,\dots,1)$ and $\ell_2$ be a line such that $\type(\ell_2) = (+,1,\dots,1)$.
Both lines $\ell_1$ and $\ell_2$ are fixed because $p_4,p_3 \in \ell_1$ and $\ell_2$ connects two fixed corners.
Therefore, the point $p_5$, which is an intersection of $\ell_1$ and $\ell_2$, is fixed by Observation~\ref{obs:FixedIntersection} as well.
\qed\end{proof}

In the proofs of the following lemmas we use the notions of blocks.
Let $p = [x_1, \dots, x_d]$.
We call a set $B_j(p) = \bigl\{i \in [d] \bigl| x_i = j \vee x_i = n - j + 1\bigr\}$ $j$-\emph{block} of $p$.
Note that $j$-block and $(n-j+1)$-block are the same set.
We say a line $\ell$ such that $p \in \ell$ is \emph{active} in the $j$-block $B_j(p)$ if there exists some $i \in B_j(p)$ such that $\type(\ell)_i \in \{+,-\}$.
For example $p = [1,1,2,4]$ be a point of the cube $4^4$ then $1$-block of $p$ is the set $\{1,2,4\}$ and a line $\ell \in \mathbb{L}(4^4)$ of a type $\type(\ell) = (+,1,2,-)$ is active in $B_1(p)$.
We consider only non-empty blocks.
We say that the point $p$ from the example has blocks $B_1(p)$ and $B_2(p) = \{3\}$.

\begin{lemma}
\label{lem:UniqueBlock}
Let $p$ be a point of $n^d$ and $\ell$ be a line such that $p \in \ell$.
Then, there is exactly one $j \in [n]$ such that $\ell$ is active in $B_j(p)$.
\end{lemma}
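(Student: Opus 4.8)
The plan is to argue directly from a linear ordering of the line, using the fact that a nonconstant coordinate of a line of $n^d$ is completely determined by its type. First I would fix an ordering $q = (q^1, \dots, q^n)$ of $\ell$ into a linear sequence and let $t \in \{1, \dots, n\}$ be the index with $p = q^t$. The whole argument will rest on one observation: if $\type(\ell)_i = +$, then the sequence $\tilde{q}_i = (q^1_i, \dots, q^n_i)$ is strictly increasing and consists of $n$ distinct elements of $[n] = \{0, \dots, n-1\}$, so it is forced to be exactly $(0, 1, \dots, n-1)$; hence $q^t_i = t - 1$. Symmetrically, if $\type(\ell)_i = -$ then $\tilde{q}_i = (n-1, \dots, 0)$ and $q^t_i = n - t$.

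Writing $a = t - 1$, this says that every coordinate $i$ with $\type(\ell)_i = +$ satisfies $p_i = a$, while every coordinate with $\type(\ell)_i = -$ satisfies $p_i = n - t = n - a - 1$. Thus the value of $p$ in every active coordinate of $\ell$ is one of the two complementary values $a$ or $n - a - 1$, and by the definition of a block both of these place the index $i$ into the single block $B_a(p) = B_{n-a-1}(p)$. I would conclude from this that all active coordinates of $\ell$ lie in $B_a(p)$, so $\ell$ is active in no block other than $B_a(p)$.

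To finish, I would note that a line has, by definition, at least one nonconstant coordinate, so there is an index $i$ with $\type(\ell)_i \in \{+, -\}$; by the above this $i$ lies in $B_a(p)$, which is therefore nonempty and witnesses that $\ell$ is active in $B_a(p)$. Hence $B_a(p)$ is the unique active block, where ``exactly one $j$'' is read modulo the identification $B_j = B_{n-j-1}$. The only real content is the forcing observation of the first paragraph --- that an increasing or decreasing coordinate must sweep through all of $[n]$ and so pins $p_i$ down to the pair $\{a, n-a-1\}$ determined solely by the position $t$ of $p$ along $\ell$; once this is established the pairwise disjointness of the blocks makes the conclusion immediate, and I expect no genuine obstacle beyond spotting it.
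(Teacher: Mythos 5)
Your proof is correct and rests on the same key fact as the paper's: a strictly increasing (resp.\ decreasing) coordinate sequence of a line must be exactly $(0,1,\dots,n-1)$ (resp.\ its reverse), so the value of $p$ in every active coordinate is pinned to the pair $\{t-1,\,n-t\}$ determined by the position $t$ of $p$ on $\ell$. The paper packages this as a short contradiction argument (two distinct active blocks would force types $(+,-)$ on the corresponding coordinates, whence $j=n-i-1$ and the blocks coincide), while you identify the unique block directly; the content is the same.
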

\begin{proof}
It is clear that there is at least one $j \in [n]$ such that $\ell$ is active in $B_j(p)$.
Suppose $\ell$ is active in $B_i(p)$ and $B_j(p)$, $i \neq j$.
Therefore, $p$ has some coordinates equal to $i$ or $n - i + 1$ and some coordinates equal to $j$ or $n - j + 1$.
Suppose $p$ has some coordinates equal to $i$ and $j$ (other cases are analogous).
Without loss of generality $p = [i,j,\dots]$.
Since $i \neq j$, $\type(\ell) \neq (+,+,\dots)$.
Thus, $\type(\ell) = (+,-,\dots)$.
However, it means that $j = n - i + 1$ and $B_i(p) = B_{n-i+1}(p)$.
\qed\end{proof}

\begin{lemma}
\label{lem:NonCentralDegree}
Let $p = [x_1,\dots,x_d]$ be a point of $n^d$ and it has a block $B_j(p)$ for $j \neq \gamma$.
Then, there are $2^k - 1$ active lines in $B_j(p)$ where $k = \bigl|B_j(p)\bigr|$.
\end{lemma}
\begin{proof}
For every $J \subseteq B_j(p), J \neq \emptyset$ we define a line $\ell_J$ active in $B_j(p)$ in the following way.
Let $q_J$ be a linear sequence such that $p \in q_J$ and for $i \in [d]$,
\[
\type(q_J)_i =
\begin{cases}
x_i & i \not \in J,\\
+ & i \in J \text{ and } x_i = j,\\
- & i \in J \text{ and } x_i = n-j+1.
\end{cases}
\]
For example,
\[
p = [\overbrace{n-j+1,j,\dots,j}^k,x_{k+1},\dots,x_d] \text{ and } J = \{1, 2\}
\]
the linear sequence $q_J$ has the type
\[
\type(q_J) = (-,+,j\dots,j,x_{k+1},\dots,x_d).
\]
Note that for each $J_1,J_2 \subseteq B_j(p), J_1 \neq J_2$ the linear sequences $q_{J_1}$ and $q_{J_2}$ represent different lines.

On the other hand, every line $\ell$ active in $B_j(p)$ defines a non-empty subset of $B_j(p)$ as coordinates where $\ell$ has non-constant coordinate sequences.
Therefore, the number of lines active in $B_j(p)$ is the number of non-empty subsets of $B_j(p)$, which is $2^k - 1$.
\qed\end{proof}

\begin{lemma}
\label{lem:CentralDegree}
Let $p = [x_1,\dots,x_d]$ be a point of $n^d$ and it has block $B_\gamma(p)$.
Then, there are $\frac{3^k - 1}{2}$ active lines in $B_\gamma(p)$ where $k = \bigl|B_\gamma(p)\bigr|$.
\end{lemma}
\begin{proof}
The proof is very similar to the previous one.
For every $J \subseteq B_\gamma(p), J \neq \emptyset$ we define a line $\ell_J$ such that $p \in \ell_J$.
Let $q_J$ be a linear sequence such that for $i \in \{1,\dots,d\}$
\[
\type(q_J)_i =
\begin{cases}
x_i & i \not \in J,\\
+ & i \in J.
\end{cases}
\]

And for every $K \subseteq J$ we define a linear sequence $q'_{JK}$ such that $p \in q'_{JK}$, and for $j \in \{1, \dots, d\}$,
\[
\type(q'_{JK})_j =
\begin{cases}
\type(q_J)_j & j \not \in K,\\
- & j \in K.
\end{cases}
\]
For example,
\[
p = [\overbrace{\gamma,\dots,\gamma}^k,x_{k+1},\dots,x_d] \text{ and } J = \{1, 2, 3\}, K = \{3\}
\]
the linear sequence $q'_{JK}$ has a type
\[
\type(q'_{JK}) = (+,+,-,\gamma,\dots,\gamma,x_{k+1},\dots,x_d).
\]

Note that for fixed $J,K$ and $M = J \setminus K$ the linear sequences $q'_{JK}$ and $q'_{JM}$ represent the same lines.
Again every line $\ell$ active in $B_\gamma(p)$ and its two orderings $q_1$ and $q_2$ into linear sequence define two pairs of the set $J,K \subseteq B_\gamma(p)$:
\begin{enumerate}
 \item $J = \bigl\{i \in B_\gamma(p) | \type(q_1)_i \in \{+,-\}\bigr\} =  \bigl\{i \in B_\gamma(p) | \type(q_2)_i \in \{+,-\}\bigr\}$.
 \item $K = \{i \in B_\gamma(p) | \type(q_1)_i = +\} = \{i \in B_\gamma(p) | \type(q_2)_i = - \}.$
 \item $M = \{i \in B_\gamma(p) | \type(q_1)_i = -\} = \{i \in B_\gamma(p) | \type(q_2)_i = + \}.$
\end{enumerate}

Therefore, the numbers of lines active in $B_\gamma(p)$ is a half of the number of pairs $(J,K)$ such that $J$ is a non-empty subset of $B_\gamma(p)$ and $K$ is a subset of $J$.
We have $\sum_{m=1}^{k} {\binom{k}{m}}$ choices for the set $J$.
For fixed $J$ of size $m$, we have $2^m$ choices for $K \subseteq J$.
Therefore the number of these lines is
\[
\frac{1}{2}\sum_{m=1}^{k} \Bigl({\binom{k}{m}} 2^m\Bigr) = \frac{3^{k} - 1}{2}.
\]
\qed\end{proof}

\begin{lemma}
\label{lem:Degree}
Let $n$ be odd and $\ell \in \L(n^d)$ such that the cube center $c$ is in $\ell$ and $\dim(\ell) = k$.
Let $p \in \ell, p \neq c$.
Then, $deg(p) = 2^k - 1 + \frac{3^{d-k}-1}{2}$.
\end{lemma}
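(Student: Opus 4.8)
The plan is to pin down the block structure of the point $p$ and then read off the degree from the three preceding counting lemmas. First I would analyze the coordinates of $p$. Since $c$ lies on $\ell$ and $\dim(\ell) = k$, the line $\ell$ is active in exactly $k$ coordinate positions and constant in the remaining $d-k$; because $c = [\gamma,\dots,\gamma]$ is a point of $\ell$, the shared value of every point of $\ell$ in each of the $d-k$ constant positions must equal $\gamma$. Parameterizing the points of $\ell$ by an index $t \in [n]$ — so that an active coordinate of type $+$ takes value $t$ and one of type $-$ takes value $n-1-t$ — the center $c$ corresponds to $t = \gamma$, using $n - \gamma - 1 = \gamma$. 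As $p \neq c$, the point $p$ sits at some index $t \neq \gamma$, hence in each of the $k$ active positions $p$ takes the value $t$ or $n-1-t$, both distinct from $\gamma$.

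From this I would conclude that $p$ has exactly two non-empty blocks: the central block $B_\gamma(p)$, consisting of the $d-k$ constant positions and of size $d-k$, and the non-central block $B_t(p)$, consisting of the $k$ active positions and of size $k$. These are genuinely distinct blocks since $t \neq \gamma$, and all $k$ active coordinates land in the single block $B_t(p)$ because $t$ and $n-1-t$ define the same block. (When $k = d$ there is no central block; this case is absorbed uniformly, as the central contribution below is $0$ for an empty block.)

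Next, by Lemma~\ref{lem:UniqueBlock} every line through $p$ is active in exactly one block of $p$, so $\deg(p)$ is the sum over the blocks of $p$ of the number of lines active in each. I would then apply Lemma~\ref{lem:NonCentralDegree} to the non-central block $B_t(p)$ of size $k$, contributing $2^k - 1$ active lines, and Lemma~\ref{lem:CentralDegree} to the central block $B_\gamma(p)$ of size $d-k$, contributing $\frac{3^{d-k}-1}{2}$ active lines. Adding the two contributions gives $\deg(p) = 2^k - 1 + \frac{3^{d-k}-1}{2}$, exactly as claimed.

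The one step I would be most careful about is the block analysis of the first paragraph: verifying that all active coordinates of $p$ share a common non-central block $B_t(p)$ with $t \neq \gamma$, and that the constant coordinates contribute precisely the central block $B_\gamma(p)$ of size $d-k$. Once the two block sizes are fixed at $k$ and $d-k$, the degree count is an immediate application of the three lemmas and involves no further computation.
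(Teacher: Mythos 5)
Your proof is correct and follows the same route as the paper: identify the two blocks of $p$ (a non-central block of size $k$ from the active coordinates and the central block $B_\gamma(p)$ of size $d-k$ from the constant coordinates, which must all equal $\gamma$ since $c \in \ell$), then sum the contributions from Lemmas~\ref{lem:NonCentralDegree} and~\ref{lem:CentralDegree} using the disjointness guaranteed by Lemma~\ref{lem:UniqueBlock}. Your explicit justification of the block structure and the remark on the degenerate case $k = d$ are slightly more detailed than the paper's one-line assertion, but the argument is the same.
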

\begin{proof}
Since $p \in \ell$ and $p \neq c$, the point $p$ has exactly 2 blocks $B_j(p)$ and $B_\gamma(p)$.
Note that $\bigl|B_j(p)\bigr| = k$.
Thus, the point $p$ is incident with $2^k - 1$ lines active in $B_j(p)$ and with $\frac{3^{d-k}-1}{2}$ lines active in $B_\gamma(p)$ (by Lemma~\ref{lem:NonCentralDegree} and Lemma~\ref{lem:CentralDegree}).
By Lemma~\ref{lem:UniqueBlock}, the lines active in $B_j(p)$ are disjoint from the lines active in $B_\gamma(p)$ and there are no other lines incident with $p$.
\qed\end{proof}

\begin{lemma}
\label{prp:MainDiagonal}
Every automorphism $S \in \T_n^d$ maps a main diagonal $m \in \L_m(n^d)$ onto a main diagonal $m' \in \L_m(n^d)$.
\end{lemma}
\begin{proof}
Every point on a main diagonal has only one block.
For $n$ even the proof is trivial.
For every point $q \in [n]^d$ it holds that any of blocks of $q$ is not the $\gamma$-block.
Therefore, every point $p \in m$ has degree $2^d - 1$ and any point which is not in any main diagonal has at least two blocks and thus the degree at most $2^{d - 1}$ (by Lemma~\ref{lem:NonCentralDegree}).
Every automorphism $S \in \T_n^d$ has to preserve the point degree.
Thus, a point $p \in m$ has to be mapped onto a point $p' \in m'$.

Now we prove the lemma for $n$ odd.
The center of the cube $c$ is always mapped onto $c$ ($c$ is the only point of degree $\frac{3^d-1}{2}$).
Therefore, the main diagonal $m \in \L_m(n^d)$ has to be mapped onto a line $\ell \in \L(n^d)$ such that $c \in \ell$.
By Lemma~\ref{lem:Degree}, we know the degree of a non-central point $p \in \ell$ is $\deg(p)=2^k - 1 + \frac{3^{d-k}-1}{2}$.

The degree of a non-central point $q \neq c$ on a main diagonal $m \in \L_m(n^d)$ is $\deg(q) = 2^d - 1$.
We show that if $k \neq d$ then $2^k - 1 + \frac{3^{d-k}-1}{2} \neq 2^d - 1$.
For contradiction let us suppose that $2^d - 2^k=  \frac{3^{d-k}-1}{2}$ and $k < d$.
We rewrite the formula into binary numbers:
\[
\begin{array}{rrl}
2^d & 1 \overbrace{0 \dots\dots\dots 0}^{d} &\\
-2^k & -1 \overbrace{0\dots 0}^{k} &\\
\hline
\frac{3^{d-k}-1}{2} & \overbrace{1\dots 1 \underbrace{0 \dots 0}_k}^d  &=\beta > 0.\\
\end{array}
\]

It is easy to prove by induction that $4$ divides $3^{d-k} - 1$ if and only if $d - k$ is even.
The number $\beta$ must be even so $d - k$ must be even as well.
We use the well-known divisibility-by-3 test in the binary system
for $\delta = 2\beta + 1$ (it should be equal to $3^{d-k} > 1$).
The binary number is divisible by 3 if and only if the number $E$ of even order digits
and the number $O$ of odd order digits are equal modulo 3.
Note that
\[
\delta = \overbrace{1 \dots 1 }^{d-k}\overbrace{0 \dots 0}^{k}1.
\]
The number $d-k$ is even, thus the numbers of digits of the orders $1$ to $d$ are equal,
but $|E-O|=1$ (because of the $1$ at the order 0).
Therefore $\delta$ is not divisible by 3, which is the contradiction.
\qed\end{proof}

\begin{lemma}
\label{prp:LineDim1}
Let $S \in \T_n^d$, $e$ be an edge and $p$ be a corner, such that $p \in e$.
If the corner $p$ is fixed by $S$, then $S(e) = e'$ is an edge such that $p \in e'$.
\end{lemma}

\begin{proof}
Without loss of generality the corner $p$ is $[1,\dots,1]$ and the type of $e$ is $\type(e) = (+,1,\dots,1)$.
First we prove the lemma for odd $n$.
Let $k = \dim\bigl(S(e)\bigr)$ and suppose that $1 < k < d$ (the line $S(e)$ can not have a dimension $d$ as main diagonals are mapped on to main diagonals by Lemma~\ref{prp:MainDiagonal}).
Without loss of generality 
\[
\type\bigl(S(e)\bigr) = (\overbrace{+,\dots,+}^k,1,\dots,1).
\]
Let $c_1$ be the center of $e$, i.e., the point $[\gamma,1,\dots,1]$.
Note that $c_1$ is collinear with the cube center $c$.
Thus, the point $S(c_1)$ has to be also collinear with the cube center and
\[
 S(c_1) = [\overbrace{\gamma,\dots,\gamma}^k,1,\dots,1].
\]
Consider the set of lines
\[
 L = \bigl\{\ell \in \L(n^d) \mid \forall i \leq k: \type(\ell)_i \in \{+,-\}, \forall i > k: \type(\ell)_i = 1 \bigr\}.
\]
Note that the set $L$ contains all lines incident to the vertex $S(c_1)$ which are active in the block $B_\gamma\bigl(S(c_1)\bigr)$.
Moreover, each line in $L$ intersect exactly 2 main diagonals and the intersections points are corners (in particular not the cube center $c$).
The line $S(e)$ is in $L$.
Since $k > 1$, there is a line $\ell' \in L$ different from $S(e)$.
Let $\ell$ be a preimage of $\ell'$, i.e., $\ell = S(\ell')$.
The line $\ell$ has to intersect exactly 2 main diagonals as main diagonals are mapped onto main diagonals by Lemma~\ref{prp:MainDiagonal}.
Moreover, the line $\ell$ can not intersect the main diagonals in the cube center $c$.
Note that $\ell \neq e$.
There is only one line $\ell_1$ incident to $c_1$, different from $e$, such that it intersects some main diagonal.
The type of $\ell_1$ is
\[
 \type(\ell_1) = (\gamma,+,\dots,+).
\]
However, the line $\ell_1$ intersects the main diagonals in the cube center $c$.
Thus, the line $\ell'$ does not have a preimage, which is a contradiction and $k = \dim\bigl(S(e)\bigr)= 1$.

We now complete the proof for even $n$.
For a contradiction suppose that $\dim(e') \geq 2$.
Without loss of generality the type of $e'$ is $(+,\dots,+,1,\dots,1)$.
Let $p_2 = [2,1,\dots,1]$ and $p_3 = [3,1,\dots,1]$.
Since $n \geq 4$, the points $p_2$ and $p_3$ are not corners.
Therefore, the point $p_i$ (for $i \in \{2,3\}$) has blocks $B_i(p_i) = \{1\}$ and $B_1(p_i) = \{2,\dots,d\}$.
Let $L_2$ be a set of lines incident with $p_2$ without the edge $e$ and similarly $L_3$ be a set of lines incident with $p_3$ without $e$.
Note that lines in $L_i$ (for $i \in \{2,3\}$) can be active only in the block $B_1(p_i)$.
Let $\ell_2 \in L_2$ and $\ell_3 \in L_3$.
For $\ell_2$ holds that $\type(\ell_2)_1 = 2$ and for $\ell_3$ holds that $\type(\ell_3)_1 = 3$.
Therefore, the lines $\ell_2$ and $\ell_3$ cannot intersect.

Now take images of $p_2$ and $p_3$.
Let $q_2 = S(p_2) = [i,\dots,i,1,\dots,1]$ and $q_3 = S(p_3) = [j,\dots,j,1,\dots,1]$.
Since $\dim(e') \geq 2$, the point $q_2$ is incident with a line $k_2$ such that $\type(k_2) = (+,i,\dots,i,1,\dots,1)$.
Similarly, the point $q_3$ is incident with a line $k_3$ such that $\type(k_3) = (j,+,\dots,+,1,\dots,1)$.
The lines $k_1$ and $k_2$ have to be images of some lines in $L_1$ and $L_2$, respectively.
However, the lines $k_1$ and $k_2$ intersect in a point $[j,i,\dots,i,1,\dots,1]$, which is a contradiction.
\qed\end{proof}

\begin{lemma}
\label{lem:FixingCorners}
If an automorphism $S \in \T_n^d$ fixes the corner $[1,\dots,1]$ and all its neighbors, then $S$ fixes all corners of the cube $n^d$.
\end{lemma}
\begin{proof}
We prove the automorphism $S$ fixes all corners $p = [x_1, \dots ,x_d]$
by induction over $$k(p)  = \bigl|\{i \in [d]:x_i = n\}\bigr|.$$
By the assumption, the automorphism $S$ fixes corners $p$ such that $k(p) \in \{0,1\}$.
Without loss of generality, a corner $q$ such that $k(q) > 1$ has coordinates
\[
	q = [\overbrace{n , \dots ,n}^{k(q)}, 1, \dots ,1].
\]
We take neighbors $q_1, q_2$ of the corner $q$ as
\begin{align*}
q_1 &= [\overbrace{n, \dots ,n}^{k(q)-1}, 1, \dots ,1] \\
q_2 &= [1, \underbrace{n, \dots ,n}_{k(q)-1}, 1, \dots ,1].
\end{align*}
The corners $q_1$ and $q_2$ have two common neighbors: $q$ and
\[
q_3 = [1, \overbrace{n, \dots ,n}^{k(q)-2}, 1, \dots, 1].
\]
Corners $q_1$, $q_2$ and $q_3$ are fixed by the induction hypothesis.
Therefore, corner $q$ is also fixed as it must be the neighbor of $q_1$ and $q_2$.
\qed\end{proof}

\section{Generators of the Group $\T_n^d$}
\label{sec:GeneralCube}

In this section we characterize the generators of the group $\T_n^d$.
As we stated in Section~\ref{sec:Intro}, we use the groups $\G_n^d$, $\F_{n}$.

\begin{definition}
Let $\mathbb{A}_n^d$ be a group generated by elements of $\G_n^d \cup \F_n$.
\end{definition}
We prove that $\mathbb{A}_n^d = \T_n^d$.
The idea of the proof, that resembles a similar proof of Silver~\cite{silver67}, is composed of two steps:
\begin{enumerate}
\item For any automorphism $S \in \T_n^d$ we find an automorphism $A \in \mathbb{A}_n^d$,
such that $S \circ A$ fixes all corners of the cube $n^d$ and one edge.
\item If an automorphism $S' \in \T_n^d$ fixes all corners and one edge then $S'$ is the identity.
\end{enumerate}
Hence, for every $S \in \T_n^d$ we find an inverse element $A$
such that $A$ is composed only by elements of $\A^d_n$,
therefore $S \in \mathbb{A}_n^d$.
We divide the construction of the automorphism $A$ into two steps.
In the proof of Theorem~\ref{thm:FixCorners} we construct an automorphism $A' \in \A_n^d$ such that $S \circ A'$ fixes all corners of the cube.
In the proof of Theorem~\ref{thm:FixEdge} we construct an automorphism $A'' \in \A_n^d$ such that $S \circ A' \circ A''$ fixes all corners and one edge of the cube.

In the next proofs we use the following permutations.
For $i,j \in [n]$ and $i,j \neq \gamma$ (in a case of odd $n$), let $\rho = \rho(i,j)$ be a permutation in $\S_n$ such that
\begin{enumerate}
\item $\rho(i) = j, \rho(j) = i$.
\item $\rho(n - i + 1) = n - j + 1, \rho(n - j + 1) = n - i + 1$.
\item $\rho(k) = k$ for all other $k \not \in \{i,j,n-i+1,n-j+1\}$.
\end{enumerate}
Note that the permutation $\rho(i,j)$ has exactly two cycles of the length two (or one cycle if $i = n - j + 1$) and it has the symmetry property, i.e. $F_{\rho(i,j)} \in \F_n$.
\begin{theorem}
\label{thm:FixCorners}
For all $S \in \T_n^d$ there exists $A' \in \mathbb{A}_n^d$ such that $S \circ A'$ fixes every corner of the cube $n^d$.
\end{theorem}
\begin{proof}
We start with the point $p_0 = [1, \dots ,1]$.
By Lemma~\ref{prp:MainDiagonal}, the point $S(p_0) = [x_1,\dots,x_d]$ has to be on a main diagonal, i.e., there is some $j \in [n]$ such that each $x_i$ is equal $j$ or $n - j + 1$.
The point $p_0$ cannot be mapped onto the cube center, thus $j \neq \gamma$.
We choose $F = F_{\rho(j, 1)} \in \mathbb{F}_n$.
Thus, $S \circ F(p_0)$ is a corner.
Then, we choose a translation $T_a \in \G_n^d$ where $a_i = 1$ if and only if $\bigl[S \circ F(p_0)\bigr]_i = n$.
Therefore, $S \circ F \circ T_a (p_0) = p_0$.

By induction over $i$ we can construct automorphisms $Z_i$ to fix the points $p_0$ and
\[
p_i = [1, \dots ,\underset{i}{n}, \dots, 1]
\]
for all $i \in \{1, \dots ,d\}$.
For $i = 0$, the point $p_0$ is fixed by $Z_0 = S \circ F \circ T_a$.
For $i > 0$, by induction hypothesis we have an automorphism $Z_{i-1}$ such that it fixes all points in the set $P_{i-1} = \bigl\{p_k|0 \leq k \leq i -1\bigr\}$.
The corner $p_i$ is mapped onto $p_j$ for $j \geq i$ because edges incident with $p_0$ are mapped onto edges incident with $p_0$ (by Lemma~\ref{prp:LineDim1})
and points in $P_{i-1}$ are already fixed.
If $Z_{i-1}(p_i) = p_i$, we choose $Z_i = Z_{i-1}$.
Otherwise, we choose a rotation $R_\pi$ where $\pi$ switches $i$ and $j$ coordinates, thus
\begin{align*}
 &R_\pi\bigl([x_1, \dots ,x_i, \dots ,x_j, \dots ,x_d]\bigr) = [x_1, \dots ,x_j, \dots ,x_i, \dots ,x_d]\colon \\
&R_\pi(p_j) = R_\pi\bigl([1, \dots ,\underset{i}{1,} \dots ,\underset{j}{n}, \dots ,1]\bigr)
= [1, \dots ,\underset{i}{n}, \dots ,\underset{j}{1}, \dots ,1] 
\end{align*}
We set $Z_i = Z_{i-1} \circ R_\pi$.
Hence, the automorphism $Z_i$ fixes $p_i$ and all points of $P_i$ because the rotation $R_\pi$ does not affect the first $i - 1$ coordinates.
Note that it also fixes $p_0$.
In this way we can fix all corners $p_i$ for $i \in \{0, \dots ,d - 1\}$.
Thus, the automorphism $Z_{d-1}$ fixes all points of $P_{d-1}$ and the corner $p_d$ is fixed automatically
because there is no other possibility where the corner $p_d$ can be mapped.
The automorphism $Z_{d-1}$ fixes the corner $p_0 = [1,\dots,1]$ and all its neighbors.
Therefore by Lemma~\ref{lem:FixingCorners}, the automorphism $Z_{d-1} = S \circ A'$ for some $A' \in \A^n_d$ fixes all corners of the cube.
\qed\end{proof}

\begin{theorem}
\label{thm:FixEdge}
For all $S \in \T_n^d$ there exists $A \in \mathbb{A}_n^d$ such that $S \circ A$ fixes every corner of the cube $n^d$ and every point of a line $\ell = \bigl\{[i, 1, \dots ,1] | i \in [n]\bigr\}$.
\end{theorem}
\begin{proof}
By Theorem~\ref{thm:FixCorners} we have an automorphism $A' \in \A_n^d$ such that $S' = S \circ A'$ fixes all corners of the cube.
We find an automorphism $A'' \in A_n^d$ such that $S' \circ A''$ fixes all corners and all points on the line $\ell$.
The line $\ell$ is fixed by $S'$ due to Observation~\ref{obs:FixedLine}.
Let $k = \lfloor \frac{n}{2} \rfloor$.
We construct the automorphism $A''$ by induction over $i \in \{1, \dots, k\}$.
We show that in a step $i$ an automorphism $Y_i$ fixes all corners and every point in the set
\[
Q_i = \bigl\{[j,1,\dots,1], [n-j+1,1,\dots,1]|1\leq j \leq i\bigr\}.
\]

First, let $i = 1$ and $Y_1 = S'$.
The automorphism $Y_1$ fixes all corners and $Q_1$ contains only $[1,\dots,1]$ and $[n,1,\dots,1]$, which are also corners.
Suppose that $i > 1$.
By induction hypothesis, we have an automorphism $Y_{i-1}$ which fixes all corners and every point in the set $Q_{i-1}$.
If $Y_{i-1}\bigl([i,1,\dots,1]\bigr) = [i,1,\dots,1]$ then $Y_i = Y_{i-1}$.
Otherwise, $Y_{i-1}\bigl([i,1,\dots,1]\bigr) = [j,1,\dots,1]$.
Note that $i < j \leq n-i+1$ because points in $Q_{i-1}$ are already fixed.
Also in the case of odd $n$, it holds that $j \neq \gamma$ as $i \neq \gamma$ and $[i,1,\dots,1]$ is not collinear with the cube center, thus the point $Y_{i-1}\bigl([i,1,\dots,1]\bigr)$ is not collinear with the cube center as well.
Let us consider $F_\rho \in \mathbb{F}_n$ for $\rho = \rho(i,j)$.
The automorphism $Y_{i} = Y_{i-1} \circ F_\rho$ fixes the following points:
\begin{enumerate}
\item All corners, as the automorphism $Y_{i-1}$ fixes all corners
by the induction hypothesis and $\rho(1) = 1$ and $\rho(n)=n$.
\item Set $Q_{i-1}$, as the automorphism $Y_{i-1}$ fixes the set $Q_{i-1}$
by the induction hypothesis and $\rho(s) = s$ for all $s < i$ and $s > n - i + 1$.
\item Point $[i,1,\dots,1]$: $Y_{i-1} \circ F_\rho \bigr([i,1,\dots,1]\bigl) = F_\rho \bigr([j,1,\dots,1]\bigl) = [i,1,\dots,1]$.
\item Point $[n-i+1,1,\dots,1]$ by Lemma~\ref{lem:FrontLine}.
\end{enumerate}
Note that if $n$ is odd the point $[\gamma,1,\dots,1]$ is fixed as well by an automorphism $Y_k$.
Thus, the automorphism $Y_k = S \circ A$ for some $A \in \A_n^d$ fixes all points of the line $\ell$ and all corners of the cube.
\qed\end{proof}

It remains to prove that if an automorphism $S \in \T_n^d$ fixes all corners and all points in the line $\ell = \bigr\{[i, 1,\dots, 1]| i \in [n]\bigl\}$ then $S$ is the identity.
We prove it in two parts.
First, we prove that if $d = 2$ then the automorphism $S$ is the identity.
Then, we prove it for a general dimension by an induction argument.

\begin{theorem}
\label{thm:FixFrontFace}
Let an automorphism $S \in \T_n^2$ fixes all corners of the cube $n^2$ and all points in the line $\ell = \bigr\{[i, 1]| i \in [n]\bigl\}$.
Then, the automorphism $S$ is the identity.
\end{theorem}

\begin{proof}
Let $d_1, d_2 \in \L_m(n^2)$.
Thus, $\type(d_1) = (+,+)$ and $\type(d_2)= (+,-)$.
Since all corners are fixed, the diagonals $d_1$ and $d_2$ are fixed as well due to Observation~\ref{obs:FixedLine}.
Let $p \in d_1 \cup d_2$ such that $p$ is not a corner.
The point $p$ is collinear with the only one point $q \in \ell$ such that $q$ is not a corner.
Therefore, every point on the diagonals $d_1$ and $d_2$ is fixed.

Now we prove that every line in $\L(n^2)$ is fixed.
Let $\ell_1 \in \L(n^2)$ be a line of a dimension 1.
Suppose $n$ is even.
The line $\ell_1$ intersects the diagonals $d_1$ and $d_2$ in distinct points, which are fixed.
Therefore, the line $\ell_1$ is fixed as well by Observation~\ref{obs:FixedLine}.

Now suppose $n$ is odd.
If $\ell_1$ does not contain the cube center $c_1 = [\gamma, \gamma]$ then $\ell$ is fixed by the same argument as in the previous case.
Thus, suppose $c_1 \in \ell_1$.
There are two lines $\ell_2, \ell_3$ in $\L(n^2)$ of dimension 1 which contains $c_1$. 
Their types are $\type(\ell_2) = (\gamma, +)$ and $\type(\ell_3) = (+,\gamma)$.
The line $\ell_2$ also intersects the line $\ell$.
Therefore, the line $\ell_2$ contains two fixed points $c_1$ and $[\gamma, 1]$ and thus it is fixed.
The line $\ell_3$ is fixed as well because every other line is fixed.
For better understanding of all lines and points used in the proof see Figure~\ref{fig:fixAllFrontFace} with an example of the cube $5^2$.

\begin{figure}[ht]
 \centering
 \includegraphics{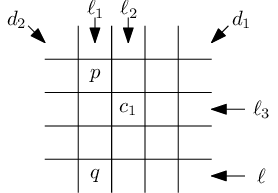}
 \caption{Points and lines used in the proof of Theorem~\ref{thm:FixFrontFace}.}
 \label{fig:fixAllFrontFace}
\end{figure}

Every point in $n^2$ is fixed due to Observation~\ref{obs:FixedIntersection}
because every point is in an intersection of at least two fixed lines.
\qed\end{proof}

\begin{theorem}
\label{thm:IdentityGeneral}
Let an automorphism $S \in \T_n^d$ fix all corners of the cube $n^d$ and all points of an arbitrary edge $e$.
Then, the automorphism $S$ is the identity.
\end{theorem}
\begin{proof}
We prove the theorem by induction over dimension $d$ of the cube $n^d$.
The basic case for $d=2$ is Theorem~\ref{thm:FixFrontFace}.

Therefore, we can suppose $d > 2$ and the theorem holds for all dimensions smaller then $d$.
Without loss of generality, $e = \bigl\{[i,1,\dots,1]\bigl|i \in [n]\bigr\}$.
For $s \in \{1,n\}$ and $i \in [d]$, let $F^s_i$ be a $(d-1)$-dimensional face which fix the $i$-th coordinate to $s$, i.e.,
\[
 F^s_i = \bigl\{[x_1,\dots,x_d] | x_i = s, x_j \in [n] \text{ for } j \neq i\bigr\}.
\]
Note that the faces $F^1_2,\dots,F^1_d$ cotanins the line $e$.
Therefore, all points of the faces $F^1_2,\dots,F^1_d$ are fixed by the induction hypothesis as all corners are fixed as well.
Let $e_1$ be an edge of type $(1,1,+,1,\dots,1)$ (since $d > 2$, the edge $e_1$ is well-defined).
It holds that $e_1 \subseteq F^1_1 \cap F^1_2$.
Thus, all corners of the face $F^1_1$ are fixed and all points of one edge of $F^1_1$ are fixed.
Therefore, all points of face $F^1_1$ are fixed by the induction hypothesis.

We will prove that all points of faces $F^n_i$ are fixed by similar argument.
Let $f_i$ be an edge of type 
\[
 \type (f_i) = 
 \begin{cases}
 (1,\dots,1,+,\underset{i}{n},1,\dots,1) & \text{if $i > 1$} \\
 (n,1,\dots,1,+) & \text{if $i = 1$}.
 \end{cases}
\]
Consider the face $F^n_i$ and set $j = i + 1 \mod d$.
The line $f_i$ is contained in the faces $F^n_i$ and $F^1_j$.
Thus by the same argument as above, all points of the faces $F^n_1,\dots,F^n_d$ are fixed.

We showed that every outer point is fixed.
Every line $\ell \in \L(n^d)$ is fixed due to Observation~\ref{obs:FixedLine} because every line contains at least two outer points.
Therefore by Observation~\ref{obs:FixedIntersection}, every point $q \in [n]^d$ is fixed because every point is an intersection of at least two lines.
\qed\end{proof}

\begin{corollary}
The groups $\T_{2k}^d$ and $\T_{2k+1}^d$ are isomorphic for $k \geq 2$.
\end{corollary}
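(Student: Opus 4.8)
The plan is to exploit that, among the three building blocks of $\T_n^d$, only $\F_n$ depends on $n$: the rotation group $\R_d$ and the axial group $\X$ are defined by coordinate operations whose composition rules are independent of $n$, since $n$ enters only through the sign flip $x\mapsto n-x-1$, which is an involution for every $n$. Writing $\mathbb{B} = \langle\R_d\cup\X\rangle$, I would first argue that $\mathbb{B}$ is the full group of signed permutations of the $d$ coordinate axes (the hyperoctahedral group), hence an abstract group depending only on $d$: indeed $\R_d$ is its index-$2$ orientation-preserving subgroup (its order $2^{d-1}d!$ forces equality with the determinant-$+1$ part), and $X\notin\R_d$, so $\langle\R_d,X\rangle$ is everything. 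Consequently the whole argument reduces to comparing $\F_{2k}$ with $\F_{2k+1}$ in a way compatible with how they sit inside the group.

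Next I would make the structural description precise. By Lemma~\ref{lem:Commuting} the factor $\F_n$ commutes elementwise with both $\R_d$ and $\X$, hence with all of $\mathbb{B}$; together with Lemma~\ref{lem:SemiCommuting} this gives $\T_n^d = \mathbb{B}\F_n$ with commuting factors. The subgroup $Z := \mathbb{B}\cap\F_n$ is therefore central in $\T_n^d$, and I claim it equals $\{\Id,F_\sigma\}$ with $\sigma(i)=n-i-1$ (the central point symmetry) for every $d\geq 2$. On the coset $\R_d$ this is Lemma~\ref{lem:RintersectionF}; on $\R_d X$ one checks, by inspecting the action on the constant points $[a,\dots,a]$, that $RX\in\F_n$ forces $R$ either to permute axes without sign changes (yielding nothing, as it would need $R=X\notin\R_d$) or to flip every axis, leaving only $RX=F_\sigma$, realised exactly when $F_\sigma X$ has determinant $+1$, i.e.\ precisely in the dimensions not already covered by Lemma~\ref{lem:RintersectionF}. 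Thus $|Z|=2$ in all cases and $\T_n^d\cong(\mathbb{B}\times\F_n)/\langle(F_\sigma,F_\sigma)\rangle$; as a sanity check this gives order $2^d d!\cdot 2^k k!/2 = 2^{d-1+k}d!k!$, matching Corollary~\ref{cor:Order}.

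The heart of the matter is an isomorphism $\psi\colon\F_{2k}\to\F_{2k+1}$, which I would build from the observation that $\F_n$ is exactly the centraliser in $\mathbb{S}_n$ of the involution $\iota\colon i\mapsto n-1-i$ (this is the symmetry property), so $\F_n$ is a hyperoctahedral group of rank $k=\lfloor n/2\rfloor$ in both parities. Concretely, the bijection $\beta\colon\{0,\dots,2k-1\}\to\{0,\dots,2k\}\setminus\{k\}$ with $\beta(i)=i$ for $i<k$ and $\beta(i)=i+1$ for $i\geq k$ carries the $\iota$-pairs of $2k$ to those of $2k+1$ (the middle symbol $k$, fixed by every element of $\F_{2k+1}$, is simply skipped), so conjugation by $\beta$ induces $\psi$, and by construction $\psi(F_\sigma)=F_\sigma$. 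Combining the canonical identity on $\mathbb{B}$ with $\psi$, both carrying the amalgamated central element $F_\sigma$ to $F_\sigma$, the pair descends through the central-product quotient to an isomorphism $\T_{2k}^d\to\T_{2k+1}^d$; the hypothesis $k\geq 2$ only serves to keep both $n=2k,2k+1$ strictly above $2$, so that the structure theorem applies.

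The step I expect to be the main obstacle is pinning down $Z=\mathbb{B}\cap\F_n$ uniformly and checking that $\psi$ is genuinely compatible with the ambient structure---that it preserves the amalgamated central element and does not clash with the relations of Lemmas~\ref{lem:Commuting}--\ref{lem:AxialGroup}. Once $\psi(F_\sigma)=F_\sigma$ is verified and $\mathbb{B}$ is seen to be $n$-independent, the compatibility becomes automatic, since $\F_n$ commutes with all of $\mathbb{B}$ and no further cross-relations can be violated; the real danger lies in an overlooked element of $Z$ or in the parity-dependent appearance of $F_\sigma X$ in $\R_d$, which is why I would treat the two cosets of $\R_d$ in $\mathbb{B}$ separately and cross-check the resulting order against Corollary~\ref{cor:Order}.
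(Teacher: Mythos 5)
Your argument is correct, and it rests on the same underlying observation as the paper---that among the generating subgroups only $\F_n$ depends on $n$, and that $\F_{2k}\cong\F_{2k+1}$ because every symmetric permutation of $[2k+1]$ fixes the middle symbol $k$---but you carry out a verification the paper omits. The paper's proof is three sentences: the rotation group is the same, $\F_{2k}\cong\F_{2k+1}$, and whether $\X$ is generated by $\R_d$ and $\F_n$ depends only on $d$. As stated that is only a plausibility argument, since pairwise isomorphic generating subgroups need not generate isomorphic groups unless the isomorphisms respect how the subgroups interact. Your central-product decomposition $\T_n^d\cong(\mathbb{B}\times\F_n)/\langle(F_\sigma,F_\sigma)\rangle$, with $\mathbb{B}=\langle\R_d\cup\X\rangle$ the $n$-independent hyperoctahedral group and $\mathbb{B}\cap\F_n=\{\Id,F_\sigma\}$ central, together with the check that your explicit $\psi\colon\F_{2k}\to\F_{2k+1}$ (conjugation by the pair-preserving bijection $\beta$) carries $F_\sigma$ to $F_\sigma$, supplies exactly the missing compatibility; the order cross-check against Corollary~\ref{cor:Order} confirms it. One small point of wording: evaluating on the constant points $[a,\dots,a]$ pins down only the flip vector of a signed permutation, since constant points are invariant under any permutation of the axes; to conclude that an element of $\mathbb{B}\cap\F_n$ also has trivial coordinate permutation you should additionally evaluate on, say, the points with a single coordinate equal to $1$ and the rest $0$, which is routine for $n>2$ and does not affect your conclusion that $|\mathbb{B}\cap\F_n|=2$ in every dimension.
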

\begin{proof}
The groups $\G_n^d$ are all isomorphic to the group of $d$-dimensional hypercube automorphism for all $n$.
For every permutation $\pi \in \mathbb{S}_{2k+1}$ with the symmetry property holds that $\pi(k) = k$.
Therefore, the group $\F_{2k}$ is isomorphic to the group $\F_{2k+1}$.
\qed\end{proof}

\section{Order of the Group $\T_n^d$}
\label{sec:SizeTnd}

In the previous section we characterized the generators of the group $\T_n^d$.
Now we compute the order of $\T_n^d$.
First, we state several technical lemmas.

\begin{lemma}
\label{lem:BasicOrder}
Orders of the basic groups are as follows.
\begin{enumerate}
\item $|\G_n^d|=2^{d}d!$.
\item $|\F_n|=2^kk!$ for $k = \lfloor\frac{n}{2}\rfloor$.
\end{enumerate}
\end{lemma}

\begin{proof}
Size of the hypercube automorphism group $\Q_d$ is well known~\cite{godsil04}.

Size of the group $\F_n$ is the number of permutations $\pi \in \S_n$ with the symmetry property.
If $n$ is even, we have $n$ possibilities how to choose the image of the first element,
we have $n-2$ possibilities for the second element, and so on, thus there are
\[
 \prod^{\frac{n}{2} - 1}_{i=0} (n-2i)
\]
such permutations.
If $n$ is odd, the element $\frac{n+1}{2}$ has to be mapped onto itself.
Therefore, the order of $\mathbb{F}_n$ where $n$ is odd is the same as the order of $\mathbb{F}_{n-1}$.
The general formula is
\[
 \prod_{i=0}^{k-1}{(2k - 2i)}= 2^k \prod_{i=0}^{k-1}{(k - i)} = 2^kk!
\]
for $k = \lfloor\frac{n}{2}\rfloor$.
\qed\end{proof}

\begin{lemma}
\label{lem:Commuting}
The groups $\G_n^d$ and $\F_n$ commute.
\end{lemma}
\begin{proof}
Let $T_a \circ R_\pi \in \G_n^d$ and $F_\rho \in \F_n$. 
Note that for $\rho$ holds that $\rho\bigl(\flip(i,b)\bigr) = \flip\bigl(\rho(i),b\bigr)$ due to the symmetry property.
Then,
\begin{align*}
T_a \circ R_\pi \circ F_\rho & \bigl([x_1, \dots ,x_d]\bigr) \\
=& F_\rho \Bigl(\bigl[\flip(x_{\pi(1)},a_{\pi(1)}),\dots,\flip(x_{\pi(d)},a_{\pi(d)})\bigr]\Bigr) \\
=& \Bigl[\rho\bigl(\flip(x_{\pi(1)},a_{\pi(1)})\bigr), \dots ,\rho\bigl(\flip(x_{\pi(d)},a_{\pi(d)})\bigr)\Bigr] \\
=& \Bigl[\flip\bigl(\rho(x_{\pi(1)}),a_{\pi(1)}\bigr), \dots ,\flip\bigl(\rho(x_{\pi(d)}),a_{\pi(d)}\bigr)\Bigr].
\end{align*}

Similarly,
\begin{align*}
F_\rho \circ T_a \circ R_\pi & \bigl([x_1, \dots , x_d]\bigr) = T_a \circ R_\pi \Bigl(\bigl[\rho(x_1), \dots, \rho(x_d)\bigr]\Bigr) \\
=& \Bigl[\flip\bigl(\rho(x_{\pi(1)}),a_{\pi(1)}\bigr), \dots ,\flip\bigl(\rho(x_{\pi(d)}),a_{\pi(d)}\bigr)\Bigr].
\end{align*}
\qed\end{proof}


By Lemma~\ref{lem:Commuting} we can conclude that any automorphism $A \in \T_n^d$ can be written as $A = G \circ F$ where $G \in \G_n^d$ and $F \in \F_n$.
Thus, the product
\[
\G_n^d \F_n = \bigl\{ G \circ F | G \in \G_n^d, F \in \F_n \bigr\}
\]
is exactly the group $\T_n^d$.
We state the well-known product formula for a group product.
\begin{lemma}[Product formula~\cite{ballaster10}]
\label{lem:ProductFormula}
Let $S$ and $T$ be subgroups of a finite group $G$.
Then, for an order of a product $ST$ holds that
\[
|ST| = \frac{|S|\cdot|T|}{|S \cap T|}.
\]
\end{lemma}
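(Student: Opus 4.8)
The plan is to count the set $ST$ by means of the multiplication map $\mu : S \times T \to ST$ given by $\mu(s,t) = st$. This map is surjective by the very definition of $ST$, and its domain is finite with $|S \times T| = |S|\cdot|T|$. Hence, if I can show that every fiber $\mu^{-1}(g)$ has the same cardinality, namely $|S \cap T|$, then the fibers partition $S \times T$ into $|ST|$ blocks of equal size, giving $|S|\cdot|T| = |ST|\cdot|S \cap T|$, from which the formula follows by dividing by $|S \cap T|$.

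The key step is therefore to compute a single fiber. I would fix $g \in ST$ and choose one factorization $g = st$ with $s \in S$, $t \in T$. For any other factorization $s't' = st$ with $s' \in S$, $t' \in T$, rearranging gives $s^{-1}s' = t(t')^{-1}$, and this common value lies in both $S$ and $T$, hence in $D := S \cap T$. Writing $d = s^{-1}s' \in D$, we get $s' = sd$ and $t' = d^{-1}t$. Conversely, for every $d \in D$ the pair $(sd, d^{-1}t)$ has $sd \in S$, $d^{-1}t \in T$, and $(sd)(d^{-1}t) = st = g$, so it lies in the fiber. Thus $d \mapsto (sd, d^{-1}t)$ is a bijection $D \to \mu^{-1}(g)$, and $|\mu^{-1}(g)| = |S \cap T|$ independently of $g$.

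The main obstacle is merely to make this fiber count uniform, i.e.\ to verify that the displayed correspondence $d \mapsto (sd, d^{-1}t)$ is well-defined into $\mu^{-1}(g)$ and bijective for each fixed $g = st$; both facts are immediate consequences of the identity $s^{-1}s' = t(t')^{-1} \in S \cap T$ characterizing an arbitrary second factorization. Equivalently, one may phrase the argument as a \emph{free} action of $S \cap T$ on $S \times T$, via $(s,t) \mapsto (sd, d^{-1}t)$, whose orbits are exactly the fibers of $\mu$; freeness forces every orbit to have size $|S \cap T|$. Either way one obtains $|S|\cdot|T| = |S \times T| = |ST|\cdot|S \cap T|$, and dividing gives the claim. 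I would stress that no hypothesis forcing $ST$ to be a subgroup of $G$ is needed, since the argument only counts the cardinality of the subset $ST \subseteq G$.
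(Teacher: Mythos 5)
Your proof is correct and complete: the fiber-counting argument for the multiplication map $\mu(s,t)=st$, with each fiber in bijection with $S\cap T$ via $d\mapsto(sd,d^{-1}t)$, is exactly the standard proof of the product formula, and your remark that $ST$ need not be a subgroup is accurate. The paper itself gives no proof of this lemma, citing it as a well-known result, so there is nothing to compare against beyond noting that your argument is the canonical one found in the cited literature.
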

Thus, for computing the order of $\T_n^d$ we need to compute the order of the intersection of the basic groups $\G_n^d$ and $\F_n$.

\begin{lemma}
\label{lem:RintersectionF}
The intersection $\G_n^d \cap \F_n = \{\Id, F_\sigma\}$ where $\sigma(i) = n - i + 1$.
\end{lemma}
\begin{proof}
It is clear that $\{\Id, F_\sigma\} \subseteq \G_n^d \cap \F_n$ because $F_\sigma = T_a$ where $a = (1,\dots,1)$.

Consider a main diagonal $\ell = \bigl\{ [i, \dots ,i] | i \in [n]\bigr\}$ and its ordering into a linear sequence $(p^1,\dots,p^n)$, where $p^i = [i,\dots,i]$.
Every automorphism $G \in \G_n^d$ preserves an order of points on the line $\ell$, i.e., a sequence $\bigl(G(p^1),\dots,G(p^n)\bigr)$ is an ordering of $G(\ell)$ into a linear sequence.

Consider an automorphism $F_\rho \in \F_n$.
We claim that $\bigl(F_\rho(p^1),\dots,F_\rho(p^n)\bigr)$ is an ordering of $F_\rho(\ell)$ into a linear sequence if and only if $\rho$ is the identity or $\sigma$.
Recall that $F_\rho(p^i) = \bigl[\rho(p^i_1),\dots,\rho(p^i_d)\bigr]$.
Thus, to $\bigl(F_\rho(p^1),\dots,F_\rho(p^n)\bigr)$ be a linear sequence it must hold that $\rho(i) = i$ or $\rho(i) = n - i + 1$ for all $i$.
We conclude that if $\rho$ is not the identity and $\rho \neq \sigma$ then $F_\rho \not \in \G_n^d$.
\qed\end{proof}
As a corollary of Lemmas~\ref{lem:BasicOrder},~\ref{lem:ProductFormula} and~\ref{lem:RintersectionF} we get the second part of Theorem~\ref{thm:MainResult}.

\section{The Complexity of Colored Cube Isomorphism}
\label{sec:Complexity}
In this section we prove Theorem~\ref{thm:Complexity}.
As we stated before, {\sc CHI} is in ${\GI}$.
Therefore, {\ColoredCube} as a subproblem of {\sc CHI} is in ${\GI}$ as well.
It remains to prove the problem is $\GI$-hard.
Let $s_1$ and $s_2$ be colorings of a combinatorial cube $n^d$.
We say the colorings $s_1$ and $s_2$ are \emph{isomorphic} if there is an automorphism $A \in \T^d_n$ which preserves the colors, i.e., for every point $p$ of a combinatorial cube $n^d$ holds that $s_1(p) = s_2\bigl(A(p)\bigr)$.

First, we describe how we reduce the input of {\sc Graph Isomorphism} to the input of {\ColoredCube}.
Let $G = (V, E)$ be a graph.
Without loss of generality $V = [n]$.
We construct the coloring $s^G : [k]^2 \rightarrow \{0,1\}$ for  $k = 2n + 4$ as follows.
The value of $s^G\bigl([i,j]\bigr)$ is 1 if $[i,j] = [n+1,n+1]$ or $[i,j] = [n+1,n+2]$ or $i,j \leq n$ and $\{i,j\} \in E$.
The value of $s^G(p)$ for any other point $p$ is 0.
We can view the coloring $s^G$ as a matrix $M^G$ such that $M^G_{i,j} = s^G\bigl([i,j]\bigr)$.
The submatrix of $M^G$ consisting of the first $n$ rows and $n$ columns is exactly the adjacency matrix of the graph $G$.
For example, let $P$ be a path on 3 vertices, then
\[
M^P =
\begin{pmatrix}
0 & \mathbf{1} & 0 & 0 & 0 & 0 & 0 & 0 & 0 & 0 \\
\mathbf{1} & 0 & \mathbf{1} & 0 & 0 & 0 & 0 & 0 & 0 & 0 \\
0 & \mathbf{1} & 0 & 0 & 0 & 0 & 0 & 0 & 0 & 0 \\
0 & 0 & 0 & \mathbf{1} & \mathbf{1} & 0 & 0 & 0 & 0 & 0 \\
0 & 0 & 0 & 0 & 0 & 0 & 0 & 0 & 0 & 0 \\
0 & 0 & 0 & 0 & 0 & 0 & 0 & 0 & 0 & 0 \\
0 & 0 & 0 & 0 & 0 & 0 & 0 & 0 & 0 & 0 \\
0 & 0 & 0 & 0 & 0 & 0 & 0 & 0 & 0 & 0 \\
0 & 0 & 0 & 0 & 0 & 0 & 0 & 0 & 0 & 0 \\
0 & 0 & 0 & 0 & 0 & 0 & 0 & 0 & 0 & 0 \\
\end{pmatrix}
\]

The idea of the reduction is as follows.
If two colorings $s^{G_1}, s^{G_2}$ are isomorphic via a cube automorphism $A \in \T_k^2$ then $A$ can be composed only of automorphisms in $\F_k$ (due to the colors of $[n+1,n+1]$ and $[n+1,n+2]$).
Hence, the automorphism $A = F_\rho$ for some permutation $\rho \in \S_k$.
Moreover, the permutation $\rho$ maps the numbers in $[n]$ to the numbers in $[n]$ and describes the isomorphism between the graphs $G_1$ and $G_2$.

\begin{lemma}
\label{lem:PermutationAutomorphism}
Let $G_1, G_2$ be graphs without vertices of degree 0.
If colorings $s^{G_1}$, $s^{G_2}$ are isomorphic via a cube automorphism $A \in \T_k^2$ then $A = F_\rho \in \F_k$.
Moreover, $\rho(i) \leq n$ if and only if $i \leq n$.
\end{lemma}
\begin{proof}
Let $A = S \circ F$ where $S \in \G_k^2, F \in \F_k$ and $m_1, m_2$ be main diagonals of $[k]^2$ of a type $(+,+)$ and $(+,-)$, respectively.
Due to the colors of $p_1 = [n+1,n+1]$ and $p_2 = [n+1,n+2]$ we will show that $A$ has to fix $m_1$ and $m_2$ and that $A \in \F_k$.

Since $G_1$ and $G_2$ are simple graphs without loops, there is exactly one point of the color 1 on the main diagonal $m_1$ (the point $p_1$) and there are no points of the color 1 on the main diagonal $m_2$ in both colorings $s^{G_1}$ and $s^{G_2}$.
Therefore, $A$ has to fix $m_1$ and $m_2$ and the point $p_1$.
Let $\ell_1$ be a line of a type $(n+1, +)$ and $\ell_2$ be a line of a type $(+,n+1)$.
Note that in both coloring the line $\ell_1$ contains two points of the color 1 ($p_1$ and $p_2$) and $\ell_2$ contains only one point of the color 1 (the point $p_1$).
The line $\ell_1$ can be mapped only on the lines $\ell_1$ or $\ell_2$.
However, due to the colors of the points $p_1$ and $p_2$ in both coloring the line $\ell_1$ has to be fixed.
Thus, the point $p_2$ is fixed as well.

Every automorphism in $\F_k$ fixes the lines $m_1$ and $m_2$.
Thus, the automorphism $S$ has to fix the main diagonals as well.
Let $S = T_a \circ R_\pi$.
There are 8 automorphisms in $\G_k^2$.
By simple case analysis we know that $S$ fixes the lines $m_1$ and $m_2$ if and only if $a = (0,0)$ or $a = (1,1)$.
If $\pi$ is the identity, then $S = T_a$ and it is also in $\F_k$ (see Lemma~\ref{lem:RintersectionF}).

Let us suppose that $\pi \neq \Id$ and $a = (0,0)$.
Note that $A(p_1) = \bigl[\rho(n+1),\rho(n+1)\bigr]$.
The automorphism $A$ fixes the point $p_1$, thus $\rho(n+1) = n+1$.
Therefore, $A(p_2) = \bigl[\rho(n+2),\rho(n+1)\bigr] = \bigl[\rho(n+2),n+1\bigr] \neq p_2$, which is a contradiction as we proved that $A$ fixes the point $p_2$.
The proof for $a = (1,1)$ is identical.
Thus, we conclude that $A \in \F_k$.

Now we prove the last part of the lemma.
We already know that $\rho(n+1) = n+1$ and $\rho(n+2) = n+2$.
For every $i \leq n$ there is at least one point with color~1 on a line of type $(+,i)$ in both colorings $s^{G_1}, s^{G_2}$ because graphs $G_1$ and $G_2$ do not contain any vertex of degree 0.
On the other hand, for every $i \geq n + 3$ there are only points with color 0 on a line of type $(+,i)$ in both colorings.
Therefore, if $i \leq n$ then $i$ has to be mapped on $j \leq n$ by $\rho$.
\qed\end{proof}

\noindent The proof of the following theorem follows from Lemma~\ref{lem:PermutationAutomorphism}.
\begin{theorem}
\label{thm:Reduction}
Let $G_1 = (V_1, E_1)$ and $G_2 = (V_2, E_2)$ be graphs without vertices of degree 0.
Then, the graphs $G_1$ and $G_2$ are isomorphic if and only if the colorings $s^{G_1}$ and $s^{G_2}$ are isomorphic.
\end{theorem}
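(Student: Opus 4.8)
The plan is to prove the two implications separately, leaning on Lemma~\ref{lem:PermutationAutomorphism} for the direction from colorings to graphs and supplying an explicit construction for the converse.

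For the direction stating that a coloring isomorphism yields a graph isomorphism, I would start from a color-preserving cube automorphism $A$ of $s^{G_1}$ and $s^{G_2}$. Since neither $G_1$ nor $G_2$ has a vertex of degree $0$, Lemma~\ref{lem:PermutationAutomorphism} applies and gives $A = F_\pi$ for some $F_\pi \in \F_k$ with $\pi(i) \leq n-1$ exactly when $i \leq n-1$. Hence the restriction $\phi = \pi|_{\{0,\dots,n-1\}}$ is a permutation of the vertex set $\{0,\dots,n-1\}$. I would then invoke color preservation on the top-left $n\times n$ block of $M^G$: for $i,j \leq n-1$ the only source of color $1$ at $[i,j]$ is an adjacency $\{i,j\}\in E_1$, because the two designated points $[n,n]$ and $[n,n+1]$ lie outside this block, and the same holds at $[\phi(i),\phi(j)]$ for $E_2$. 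From $s^{G_1}([i,j]) = s^{G_2}([\pi(i),\pi(j)]) = s^{G_2}([\phi(i),\phi(j)])$ it then follows that $\{i,j\}\in E_1 \iff \{\phi(i),\phi(j)\}\in E_2$, so $\phi$ is a graph isomorphism.

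For the converse, given a graph isomorphism $\phi$ of $G_1$ and $G_2$ (a permutation of $\{0,\dots,n-1\}$), I would extend it to a permutation $\pi \in \mathbb{S}_k$ that fixes $n$, $n+1$, $n+2$, $n+3$ and acts on the mirror block $\{n+4,\dots,2n+3\}$ by the reflected copy of $\phi$, namely $\pi(k-1-i) = k-1-\phi(i)$ for $i \leq n-1$. This $\pi$ satisfies the symmetry property by construction, so $F_\pi \in \F_k$. It then remains to verify color preservation: the points $[n,n]$ and $[n,n+1]$ are fixed by $F_\pi$ (since $\pi$ fixes $n$ and $n+1$) and retain color $1$; every point $[i,j]$ with $i,j\leq n-1$ and $\{i,j\}\in E_1$ is sent to $[\phi(i),\phi(j)]$, which has color $1$ in $s^{G_2}$ precisely because $\phi$ preserves edges; and since $|E_1|=|E_2|$, the color-$1$ points of $s^{G_1}$ are mapped bijectively onto those of $s^{G_2}$, forcing the color-$0$ points to map among themselves. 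Thus $F_\pi$ is a color-preserving isomorphism of $s^{G_1}$ and $s^{G_2}$.

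The only technical care lies in the converse: one must confirm that the mirrored extension $\pi$ genuinely lies in $\F_k$ and that the two designated color-$1$ points are preserved. Both follow immediately once $\pi$ is defined to fix $n$, $n+1$, $n+2$, $n+3$, since under the reflection $i \mapsto k-1-i$ (with $k = 2n+4$) these fixed indices pair up as $n \leftrightarrow n+3$ and $n+1 \leftrightarrow n+2$, so the symmetry property holds on the central block as well as on the two reflected blocks. No deeper argument is required, the backward implication being essentially delivered by Lemma~\ref{lem:PermutationAutomorphism}.
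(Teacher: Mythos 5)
Your proposal is correct and follows essentially the same route as the paper: the forward direction is delegated to Lemma~\ref{lem:PermutationAutomorphism} exactly as in the paper, and the backward direction extends the graph isomorphism to a permutation of $[k]$ fixing $n$ and $n+1$ and completed by the symmetry property. Your version merely spells out explicitly the mirrored extension on $\{n+4,\dots,2n+3\}$ and the counting step for the color-$0$ points, which the paper leaves implicit.
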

\begin{proof}
First, suppose that $s^{G_1}$ and $s^{G_2}$ are isomorphic.
Let $V_1 = V_2 = [n]$.
By Lemma~\ref{lem:PermutationAutomorphism}, we know that $s^{G_1}$ and $s^{G_2}$ are isomorphic via a cube automorphism $F_\rho \in \F_k$.
We define the function $f: V_1 \rightarrow V_2$ as $f(i) = \rho(i)$.
By Lemma~\ref{lem:PermutationAutomorphism}, $f$ is a well defined bijection.
It remains to prove that $f$ is a graph isomorphism:
\[
\{i,j\} \in E_1 \Leftrightarrow s^{G_1}\bigl([i,j]\bigr) = 1 \Leftrightarrow s^{G_2}\bigl([\rho(i),\rho(j)]\bigr) = 1 \Leftrightarrow \bigl\{f(i), f(j)\bigr\} \in E_2.
\]

Now we prove the other implication.
Let $f: V(G_1) \rightarrow V(G_2)$ be an isomorphism of $G_1$ and $G_2$.
We construct the permutation $\rho: [k] \rightarrow [k]$ as follows:
\[
 \rho(i) =
 \begin{cases}
  i & n+1 \leq i  \leq n + 2 \\
  f(i) & i \leq n
 \end{cases}
\]
We define values of $\rho(i)$ for $i \geq n + 3$ in such a way the symmetry property holds for the permutation $\rho$.

We prove that $F_\rho \in \F_k$ is an isomorphism between $s^{G_1}$ and $s^{G_2}$.
Let us suppose that $s^{G_1}\bigl([i,j]\bigr) = 1$.
If $[i,j] = [n+1,n+1]$ or $[i,j] = [n+1, n+2]$ then $s^{G_2}\bigl(F_\rho([i,j])\bigr) = 1$ as well.
Otherwise, $i,j \leq n$ because there is no other point colored by 1.
Thus,
\[
 s^{G_1}\bigl([i,j]\bigr) = 1 \Leftrightarrow \{i,j\} \in E_1 \Leftrightarrow \bigl\{f(i),f(j)\bigr\} \in E_2 \Leftrightarrow s^{G_2}\bigl([\rho(i),\rho(j)]\bigr) = 1.
\]
Hence, we proved that $s^{G_1}\bigl([i,j]\bigr) = 1$ if and only if $s^{G_2}\bigl(F_\rho[i,j]\bigr) = 1$.
\qed\end{proof}


%

We may suppose that the input graphs $G_1$ and $G_2$ have minimum degree at least
1 for the purpose of the polynomial reduction of {\sc Graph Isomorphism} to \ColoredCube.
Thus, Theorem~\ref{thm:Complexity} follows from Theorem~\ref{thm:Reduction}.

\section{Open Problems}
We characterized the automorphism group of the cube $n^d$ for finite $n$ and $d$.
It would be interesting to characterize the automorphisms of the cube with an infinite dimension.
Would the automorphisms be the same even for uncountable dimension?
The lines of the cube $n^d$ cannot be (straightforwardly) generalized to infinite $n$, because of the decreasing coordinate sequences.

We also proved that the {\ColoredCube} problem is basically as hard as the {\sc Graph Isomorphism} problem.
However, for strategy searching algorithms the most important task is to prune the game tree at upper levels, i.e., after constantly many turns.
Thus, a natural question arises: Is there a polynomial time algorithm for the {\ColoredCube} problem if all color classes except one have constant size?

\section*{Acknowledgment}
We would like to thank the anonymous referees for many tips for
improving the presentation and for pointing out the weak spots in the
proofs.
\bibliography{main}

\begin{thebibliography}{10}

\bibitem{allis94}
L.~V. Allis.
\newblock {\em Searching for Solutions in Games and Artificial Intelligence}.
\newblock PhD thesis, University of Limburg, 1994.

\bibitem{arvind13}
V.~Arvind, Bireswar Das, Johannes K\"{o}bler, and Seinosuke Toda.
\newblock Colored hypergraph isomorphism is fixed parameter tractable.
\newblock {\em Algorithmica}, 71(1):120--138, January 2015.

\bibitem{arvind06}
V.~Arvind and Johannes K{\"o}bler.
\newblock On hypergraph and graph isomorphism with bounded color classes.
\newblock In Bruno Durand and Wolfgang Thomas, editors, {\em STACS 2006}, pages
  384--395, Berlin, Heidelberg, 2006. Springer Berlin Heidelberg.

\bibitem{ballaster10}
A.~Ballester-Bolinches, R.~Esteba-Romero, and M.~Asaad.
\newblock {\em Products of Finite Groups}.
\newblock Be Gruyter, 2010.

\bibitem{beck06}
J.~Beck.
\newblock {\em Tic-tac-toe theory}.
\newblock Cambridge university press, 2006.

\bibitem{booth77}
K.~S. Booth and C.~J. Colbourn.
\newblock Problems polynomially equivalent to graph isomorphism.
\newblock Technical Report CS-77/04, Department of Computer Science, University
  of Waterloo, 1977.

\bibitem{choudum08}
S.~A. Choudum and V.~Sunitha.
\newblock Automorphisms of augmented cubes.
\newblock {\em International Journal of Computer Mathematics},
  85(11):1621--1627, 2008.

\bibitem{godsil04}
C.~Godsil and G.~Royle.
\newblock {\em Algebraic Graph Theory}.
\newblock Springer-Verlag, 2004.

\bibitem{hales63}
A.W. Hales and R.~Jewett.
\newblock On regularity and positional games.
\newblock {\em Transactions of the American Mathematical Society},
  106:222--229, 1963.

\bibitem{harary99}
F.~Harary.
\newblock The automorphism group of a hypercube.
\newblock {\em Journal of Universal Computer Science}, 6(1):136--138, 2000.

\bibitem{patashnik80}
O.~Patashnik.
\newblock Qubic: $4 \times 4 \times 4 $ tic-tac-toe.
\newblock {\em Mathematics Magazine}, 53:202--216, 1980.

\bibitem{silver67}
R.~Silver.
\newblock The group of automorphisms of the game of 3-dimensional ticktacktoe.
\newblock {\em The American Mathematical Monthly}, 74(3):247--254, 1967.

\end{thebibliography}
\bibliographystyle{plain}

\end{document}